
\PassOptionsToPackage{cal=euler}{mathalfa}

\documentclass[
    subscriptcorrection,
    upint,
    varvw,
    barcolor=black,
    balance,
    hyphenate,
    french,
    pdf-a,
    nolists,
    nofoot
]{asmejour}

\hypersetup{%
	pdfauthor={Mihails Milehins},
	pdftitle={template.pdf},
	pdfkeywords={Bouc-Wen Model, Collision, Hysteresis, Impact, Rigid Body Dynamics}
}

\usepackage{paralist}
\usepackage{tikz}
\usetikzlibrary{calc}
\usetikzlibrary{arrows.meta}
\usepackage{pgfplots}
\usepackage{physics}
\usepackage{amsthm}
\usepackage{makecell}

\newtheorem{theorem}{Theorem}[section]
\newtheorem{lem}[theorem]{Lemma}
\newtheorem{prop}[theorem]{Proposition}

\theoremstyle{definition}
\newtheorem{definition}{Definition}[section]
\theoremstyle{remark}

\hyphenation{BWMCL}
\hyphenation{BWSHCCL}
\hyphenation{BWSHCCM}
\hyphenation{NDBWSHCCM}
\hyphenation{BWMCM}
\hyphenation{NDBWMCM}


\JourName{Computational and Nonlinear Dynamics}

                   

\begin{document}



\SetAuthorBlock{Mihails Milehins\CorrespondingAuthor}{%
	Department of Mechanical Engineering,\\
	Auburn University,\\
	Auburn, AL 36849,\\
    email: mzm0390@auburn.edu} 

\SetAuthorBlock{Dan B. Marghitu}{
	Department of Mechanical Engineering,\\
	Auburn University,\\
	Auburn, AL 36849,\\
    email: marghdb@auburn.edu}

\title{Incremental Collision Laws Based on the Bouc-Wen Model: Improved Collision Models and Further Results}

\keywords{Impact and Contact Modeling, Multibody System Dynamics, Nonlinear Dynamical Systems}

\begin{abstract}
In the article titled ``The Bouc-Wen Model for Binary Direct Collinear Collisions of Convex Viscoplastic Bodies'' and published in the Journal of Computational and Nonlinear Dynamics (Volume 20, Issue 6, June 2025), the authors studied mathematical models of binary direct collinear collisions of convex viscoplastic bodies that employed two incremental collision laws based on the Bouc-Wen differential model of hysteresis. It was shown that the models possess favorable analytical properties, and several model parameter identification studies were conducted, demonstrating that the models can accurately capture the nature of a variety of collision phenomena. In this article, the aforementioned models are augmented by modeling the effects of external forces as time-dependent inputs. Furthermore, the range of the parameters under which the models possess favorable analytical properties is extended to several corner cases that were not considered in the prior publication. Finally, the previously conducted model parameter identification studies are extended, and an additional model parameter identification study is provided in an attempt to validate the ability of the augmented models to represent the effects of external forces.
\end{abstract}

\date{\today} 

\maketitle 

\section{Introduction}\label{sec:introduction}

There exist two primary approaches for modeling of systems of rigid bodies with contacts: nonsmooth dynamics formulations (e.g., see Refs. \cite{panagiotopoulos_inequality_1985, moreau_unilateral_1988, pfeiffer_multibody_2004, stewart_dynamics_2011, goebel_hybrid_2012, brogliato_nonsmooth_2016, sanfelice_hybrid_2021}) and continuous formulations (e.g., see Refs. \cite{terzopoulos_elastically_1987, platt_constraint_1988, moore_collision_1988, movahedi-lankarani_canonical_1988}). This article is concerned with continuous formulations, which require a continuous dynamic model that can describe the evolution of the contact force during the collision events (e.g., see Refs. \cite{machado_compliant_2012, corral_nonlinear_2021}). Such dynamic models are referred to as incremental collision laws.

In Ref. \cite{milehins_boucwen_2025}, the authors studied mathematical models of binary direct collinear collisions of convex viscoplastic bodies using two incremental collision laws based on the Bouc-Wen differential model of hysteresis (\cite{bouc_forced_1968, bouc_modemathematique_1971, wen_method_1976}, see also Ref. \cite{ikhouane_systems_2007}).\footnote{The specific form of the Bouc-Wen model that was used in Ref. \cite{milehins_boucwen_2025} and in this work is based on the form employed in Ref. \cite{ma_parameter_2004}.} These collision laws are the Bouc-Wen-Simon-Hunt-Crossley Collision Law (BWSHCCL), an extension of the Simon-Hunt-Crossley Collision Law (see \cite[Simon (1967), as cited in Ref.][]{brogliato_nonsmooth_2016} and Ref. \cite{hunt_coefficient_1975}) that is formed by a parallel connection of a nonlinear viscous energy dissipation element and a Bouc-Wen hysteretic element with a nonlinear output function, and the Bouc-Wen-Maxwell Collision Law (BWMCL), an extension of the Maxwell Collision Law (see Refs. \cite{maxwell_dynamical_1867, johnson_contact_1985, butcher_characterizing_2000}) that is formed by a series connection of a linear viscous energy dissipation element and a Bouc-Wen hysteretic element with a nonlinear output function. The BWSHCCL was stated as\footnote{It is assumed that a consistent system of units is used for all dimensional quantities (the units are often omitted). For mathematical conventions see Appendix \ref{sec:NC}.}
\begin{equation}\label{eq:BWSHCCL}
\begin{cases}
\dot{x} = u\\
\dot{z} = A u - \beta \abs{z}^{n - 1} z \abs{u} - \gamma \abs{z}^n u \\
F = - \alpha k \abs{x}^{p - 1} x - \alpha_c k \abs{z}^{p - 1} z - c \abs{x}^p u
\end{cases}
\end{equation}
where $x \in \mathbb{R}$ is a state variable that represents the relative displacement of the centers of mass of the colliding bodies relative to their initial relative displacement (i.e., the relative displacement at the time of the collision) or, equivalently, the relative displacement at the contact interface (in what follows, relative displacement at a contact interface will be referred to simply as relative displacement), $z \in \mathbb{R}$ is a state variable that represents the hysteretic displacement associated with the Bouc-Wen model, $u \in \mathbb{R}$ is an input variable that represents the relative velocity of the centers of mass of the colliding bodies or, equivalently, the relative velocity at the contact interface (in what follows, relative velocity at a contact interface will be referred to simply as relative velocity), $F \in \mathbb{R}$ is an output variable that represents the contact force; the model is parameterized by $A, k \in \mathbb{R}_{>0}$, $\alpha \in (0, 1)$, $c, \beta \in \mathbb{R}_{\geq 0}$, $\gamma \in [- \beta, \beta]$, and $n, p \in \mathbb{R}_{\geq 1}$, with $\alpha_c \triangleq 1 - \alpha$.\footnote{In what follows, $\alpha_c$ will always be used as an abbreviation for $1 - \alpha$ (without an explicit elaboration).} 

The BWMCL was stated as
\begin{equation}\label{eq:BWMCL}
\begin{cases}
\dot{r} = \alpha \frac{k}{c} \abs{y}^{p - 1} y + \alpha_c \frac{k}{c} \abs{z}^{p - 1} z\\
\dot{y} = - \dot{r} + u\\
\dot{z} = A \dot{y} - \beta \abs{z}^{n - 1} z \abs{\dot{y}} - \gamma \abs{z}^n \dot{y} \\
F = - c \dot{r} = - \alpha k \abs{y}^{p - 1} y - \alpha_c k \abs{z}^{p - 1} z
\end{cases}
\end{equation}
where $r \in \mathbb{R}$ is a state variable that represents the relative displacement of a linear viscous energy dissipation element, $y \in \mathbb{R}$ is a state variable that represents the relative displacement of the Bouc-Wen hysteretic element, $z \in \mathbb{R}$ is a state variable that represents the hysteretic displacement in the Bouc-Wen hysteretic element, $u \in \mathbb{R}$ is an input variable that represents the relative velocity, $F \in \mathbb{R}$ is an output variable that represents the contact force; the model is parameterized by $A, k, c \in \mathbb{R}_{>0}$, $\alpha \in (0, 1)$, $\beta \in \mathbb{R}_{\geq 0}$, $\gamma \in [- \beta, \beta]$, and $n, p \in \mathbb{R}_{\geq 1}$.

The Bouc-Wen-Simon-Hunt-Crossley Collision Model (BWSHCCM), which is meant to represent binary direct collinear collisions and employs the BWSHCCL to model the contact force, was stated as
\begin{equation}\label{eq:BWSHCCM}
\begin{cases}
\dot{x} = v \\
\dot{z} = A v - \beta \abs{z}^{n - 1} z \abs{v} - \gamma \abs{z}^n v \\
\dot{v} = - \alpha \frac{k}{m} \abs{x}^{p - 1} x - \alpha_c \frac{k}{m} \abs{z}^{p - 1} z - \frac{c}{m} \abs{x}^p v \\
\begin{matrix} x(0) = 0, & z(0) = 0, & v(0) = -v_0 \end{matrix}
\end{cases}
\end{equation}
where $x \in \mathbb{R}$ is a state variable that represents the relative displacement, $z \in \mathbb{R}$ is a state variable that represents the hysteretic displacement associated with the BWSHCCL, $v \in \mathbb{R}$ is a state variable that represents the relative velocity, $m \in \mathbb{R}_{>0}$ is a parameter that represents the effective mass of the colliding bodies (an explanation is provided in Sec. \ref{sec:mps}; see also Ref. \cite{nikravesh_determination_2023}), $v_0 \in \mathbb{R}_{>0}$ is a parameter that describes the initial relative velocity; other parameters are adopted from the BWSHCCL.

The Bouc-Wen-Maxwell Collision Model (BWMCM), which employs the BWMCL to model the contact force, was stated as
\begin{equation}\label{eq:BWMCM}
\begin{cases}
\dot{r} = \alpha \frac{k}{c} \abs{y}^{p - 1} y + \alpha_c \frac{k}{c} \abs{z}^{p - 1} z\\
\dot{y} = w \\
\dot{z} = A w - \beta \abs{z}^{n - 1} z \abs{w} - \gamma \abs{z}^n w \\
\dot{w} = - \frac{c}{m} \dot{r} - \alpha  p \frac{k}{c} \abs{y}^{p - 1} \dot{y} - \alpha_c p \frac{k}{c} \abs{z}^{p - 1} \dot{z} \\
\begin{matrix} r(0) = y(0) = z(0) = 0, & w(0) = -v_0 \end{matrix}
\end{cases}
\end{equation}
where $r \in \mathbb{R}$ is a state variable that represents the relative displacement of the linear viscous energy dissipation element associated with the BWMCL, $y \in \mathbb{R}$ is a state variable that represents the relative displacement of the Bouc-Wen hysteretic element associated with the BWMCL, $z \in \mathbb{R}$ is a state variable that represents the hysteretic displacement of the Bouc-Wen hysteretic element associated with the BWMCL, $w \in \mathbb{R}$ is a state variable that represents the relative velocity of the Bouc-Wen hysteretic element associated with the BWMCL, $m \in \mathbb{R}_{>0}$ is a parameter that represents the effective mass of the colliding bodies (an explanation is provided in Sec. \ref{sec:mps}; see also Ref. \cite{nikravesh_determination_2023}), $v_0 \in \mathbb{R}_{>0}$ is a parameter that describes the initial relative velocity; other parameters are adopted from the BWMCL. The relative displacement $x \in \mathbb{R}$ and the relative velocity $v \in \mathbb{R}$ can be recovered by augmenting the BWMCM with the output function given by
\begin{equation}\label{eq:BWMCM_output}
(r, y, z, w) \mapsto (r, y, z, w, r + y, \dot{r} + \dot{y}) \triangleq (r, y, z, w, x, v)
\end{equation}

The nondimensionalized form of the BWSHCCM, referred to as the Nondimensionalized Bouc-Wen-Simon-Hunt-Crossley Collision Model (NDBWSHCCM), was given by 
\begin{equation}\label{eq:BWSHCCM_nd}
\begin{cases}
\dot{X} = V \\
\dot{Z} = V - B \abs{Z}^{n - 1} Z \abs{V} - \mathit{\Gamma} \abs{Z}^n V \\
\dot{V} = - \kappa \abs{X}^{p - 1} X - \kappa_c \abs{Z}^{p - 1} Z - \sigma \abs{X}^p V \\
\begin{matrix} X(0) = 0, & Z(0) = 0, & V(0) = -1 \end{matrix}
\end{cases}
\end{equation}
The relationships between the nondimensionalized and dimensional variables are given by $T \triangleq t/T_c$, $X \triangleq x/X_c$, $Z \triangleq z/Z_c$, $V \triangleq v/(X_c/T_c)$. The parameters that were used for the nondimensionalization are given in Table \ref{tab:ND}; as previously, $\kappa_c \triangleq 1 - \kappa$.\footnote{In what follows, $\kappa_c$ will always be used as an abbreviation for $1 - \kappa$ (without an explicit elaboration).} 

The nondimensionalized form of the BWMCM, referred to as the Nondimensionalized Bouc-Wen-Maxwell Collision Model (NDBWMCM), was given by 
\begin{equation}\label{eq:BWMCM_nd}
\begin{cases}
\dot{R} = \kappa \sigma \abs{Y}^{p - 1} Y + \kappa_c \sigma \abs{Z}^{p - 1} Z\\
\dot{Y} = W \\
\dot{Z} = W - B \abs{Z}^{n - 1} Z \abs{W} - \mathit{\Gamma} \abs{Z}^n W \\
\dot{W} = - \frac{1}{\sigma} \dot{R} - \kappa p \sigma \abs{Y}^{p - 1} \dot{Y} - \kappa_c p \sigma \abs{Z}^{p - 1} \dot{Z} \\
\begin{matrix} R(0) = Y(0) = Z(0) = 0, & W(0) = -1 \end{matrix}
\end{cases}
\end{equation}
with the output function given by 
\begin{equation}\label{eq:BWMCM_nd_output}
(R, Y, Z, W) \mapsto (R, Y, Z, W, R + Y, \dot{R} + \dot{Y}) \triangleq (R, Y, Z, W, X, V)
\end{equation}
The relationships between the nondimensionalized and dimensional variables are given by $T \triangleq t/T_c$, $R \triangleq r/X_c$, $Y \triangleq y/X_c$, $Z \triangleq z/Z_c$, $W \triangleq w/(X_c/T_c)$, $X \triangleq x/X_c$, $V \triangleq v/(X_c/T_c)$. The parameters that were used for nondimensionalization are given in Table \ref{tab:ND}. 

In Ref. \cite{milehins_boucwen_2025}, the authors show that if the NDBWSHCCM is parameterized by $B \in \mathbb{R}_{\geq 0}$, $\mathit{\Gamma} \in [-B, B]$, $\kappa \in (0, 1)$, $\sigma \in \mathbb{R}_{\geq 0}$, $n, p \in \mathbb{R}_{\geq 1}$, then the NDBWSHCCM has a unique bounded solution on any time interval $[0, T_e)$ with $T_e \in \mathbb{R}_{>0} \cup \{ +\infty \}$. The authors also show that if the NDBWMCM is parameterized by $B \in \mathbb{R}_{> 0}$, $\mathit{\Gamma} \in (-B, B)$, $\kappa \in (0, 1)$, $\sigma \in \mathbb{R}_{>0}$, $n \in \mathbb{R}_{\geq 1}$, $p \in \mathbb{R}_{\geq 2} \cup \{ 1 \}$, then the NDBWMCM has a unique bounded solution on any time interval $[0, T_e)$ with $T_e \in \mathbb{R}_{>0} \cup \{ +\infty \}$. Moreover, the output associated with this solution is bounded. Furthermore, the authors show that (under a slightly more restricted set of parameters) the solutions of the NDBWSHCCM and the NDBWMCM converge to an infinite set of equilibrium points at a finite distance from the origin (see also Refs. \cite{jayawardhana_stability_2012, ouyang_absolute_2014}). Lastly, the authors conduct two model parameter identification studies that demonstrate that both the NDBWSHCCM and the NDBWMCM can accurately represent a variety of collision phenomena. 

While Ref. \cite{milehins_boucwen_2025} offers significant contributions to the analysis and validation of the NDBWSHCCM and the NDBWMCM, the models and the associated analytical framework can be improved. The goal of the present study is to offer a natural extension of the work presented in Ref. \cite{milehins_boucwen_2025}.

\begin{table*}[t]
\caption{Parameters for nondimensionalization of the BWSHCCM and the BWMCM}\label{tab:ND}
\centering{
\begin{tabular*}{0.8\textwidth}{@{\hspace*{1.5em}}@{\extracolsep{\fill}}ccc@{\hspace*{1.5em}}}
\toprule
\multicolumn{1}{c}{Parameters} & \multicolumn{1}{c}{BWSHCCM} & \multicolumn{1}{c}{BWMCM} \\ 
\midrule
$T_c$             & $\left( \frac{1}{\alpha + \alpha_c A^p} \right)^{\frac{1}{p + 1}} \left( \frac{m}{k} \right)^{\frac{1}{p + 1}} v_0^{-\frac{p - 1}{p + 1}}$ & $\left(\frac{1}{\alpha + \alpha_c A^p}\right)^{\frac{1}{p + 1}} \left(\frac{m}{k}\right)^{\frac{1}{p + 1}} v_0^{-\frac{p - 1}{p + 1}}$ \\
$X_c$  & $\left( \frac{1}{\alpha + \alpha_c A^p} \right)^{\frac{1}{p + 1}} \left( \frac{m}{k} \right)^{\frac{1}{p + 1}} v_0^{\frac{2}{p + 1}}$ & $\left(\frac{1}{\alpha + \alpha_c A^p}\right)^{\frac{1}{p+1}} \left(\frac{m}{k}\right)^{\frac{1}{p + 1}} v_0^{\frac{2}{p+1}}$   \\
$Z_c$          & $\left( \frac{1}{\alpha + \alpha_c A^p} \right)^{\frac{1}{p + 1}} A \left( \frac{m}{k} \right)^{\frac{1}{p + 1}} v_0^{\frac{2}{p + 1}}$   & $\left(\frac{1}{\alpha + \alpha_c A^p}\right)^{\frac{1}{p+1}} A \left(\frac{m}{k}\right)^{\frac{1}{p + 1}} v_0^{\frac{2}{p + 1}}$ \\
$B$         & $\left( \frac{A^{p + 1}}{\alpha + \alpha_c A^p} \right)^{\frac{n}{p + 1}} \frac{\beta}{A} \left( \frac{m}{k} \right)^{\frac{n}{p + 1}} v_0^{\frac{2 n}{p + 1}}$ & $\left(\frac{A^{p+1}}{\alpha + \alpha_c A^p}\right)^{\frac{n}{p+1}} \frac{\beta}{A} \left(\frac{m}{k}\right)^{\frac{n}{p + 1}} v_0^{\frac{2 n}{p + 1}}$ \\
$\mathit{\Gamma}$ & $\left( \frac{A^{p + 1}}{\alpha + \alpha_c A^p} \right)^{\frac{n}{p + 1}} \frac{\gamma}{A} \left( \frac{m}{k} \right)^{\frac{n}{p + 1}} v_0^{\frac{2 n}{p + 1}}$    & $\left(\frac{A^{p + 1}}{\alpha + \alpha_c A^p}\right)^{\frac{n}{p + 1}} \frac{\gamma}{A} \left(\frac{m}{k}\right)^{\frac{n}{p + 1}} v_0^{\frac{2 n}{p + 1}}$ \\
$\kappa$               & $\frac{\alpha}{\alpha + \alpha_c A^p}$    & $\frac{\alpha }{\alpha + \alpha_c A^p}$ \\
$\sigma$                & $\frac{1}{\alpha + \alpha_c A^p} \frac{c}{k} v_0$    & $\left(\alpha + \alpha_c A^p\right)^{\frac{1}{p + 1}} \frac{1}{c} \left( m^p k \right)^{\frac{1}{p+1}} v_0^{\frac{p - 1}{p + 1}}$ \\
\bottomrule
\end{tabular*}
}
\end{table*}

\section{Contributions and Outline}\label{sec:contributions}

The following list identifies several possible avenues for improvement of the study presented in Ref. \cite{milehins_boucwen_2025}:
\begin{compactitem}
\item Both the BWSHCCM and the BWMCM were designed under the assumption that the only force that acts on the bodies during the collision is the contact force. However, sometimes, external forces that act on the bodies while the bodies maintain contact cannot be ignored (e.g., see Refs. \cite{tatara_effects_1977, falcon_behavior_1998, quinn_finite_2004, sorace_high_2009, ye_size-dependent_2017, shen_contact_2018, xiang_comparative_2018, carvalho_exact_2019, yardeny_experimental_2020, villegas_impact_2021, chatterjee_approximate_2022, bartz_gravity_2023, akhan_low_2024, shen_contact_2024}).
\item The analysis of the NDBWMCM was not performed for the following choices of parameters: $B = 0$, $\mathit{\Gamma} \in \{ -B, B \}$, $p \in (1, 2)$. These parameters lie within the physically plausible range and may be important for applications.
\item The parameter identification study based on the dataset in Fig. 9.5 in Ref. \cite{cross_physics_2011} was restricted to the BWSHCCM.
\end{compactitem}
The goal of the present article is to address the issues that were outlined in the list above. The BWSHCCM and the BWMCM will be augmented by modeling external forces as inputs that belong to certain function spaces, the analysis of the models will be revised to include the corner cases that were described in the list above, the model parameter identification studies will be updated, and a further model parameter identification study will be provided to validate the BWSHCCM and the BWMCM augmented with the action of external forces. The remainder of the article is organized as follows:
\begin{compactitem}
\item Section \ref{sec:mps} introduces a high-level model of the physical system.
\item Section \ref{sec:BWSHCCM} presents an augmented form of the BWSHCCM that includes the effects of external forces.
\item Section \ref{sec:BWMCM} presents an augmented form of the BWMCM that includes the effects of external forces and is more convenient for analysis in comparison to the form of the model presented in Ref. \cite{milehins_boucwen_2025}.
\item Section \ref{sec:AE} provides an application example.
\item Section \ref{sec:MPI} provides an improvement of the methodology for the identification of the parameters of the collision models and presents several applications of the methodology.
\item Section \ref{sec:conclusions} provides conclusions and recommendations.
\item Appendices \ref{sec:NC}-\ref{sec:SDA} describe the mathematical conventions, provide proofs of the main results presented in Sec. \ref{sec:BWSHCCM} and Sec. \ref{sec:BWMCM}, and describe the methodology that was used for numerical simulation in Sec. \ref{sec:AE} and Sec. \ref{sec:MPI}.
\end{compactitem}

\section{Model of the Physical System}\label{sec:mps}

The discussion that follows is with reference to Fig. \ref{fig:nc}. The notational conventions for mechanics are adopted from Ref. \cite{roithmayr_dynamics_2016} and Ref. \cite{stronge_impact_2018}. It is assumed that $\mathcal{B}_1$ is a compact and strictly convex rigid body and $\mathcal{B}_2$ is a convex rigid body with a topologically smooth surface. The bodies are assumed to come into contact at the time $t_0 = 0 \in \mathbb{R}_{\geq 0}$ with their centers of mass $G_1 = (l_1, 0, 0)$ and $G_2 = (-l_2, 0, 0)$ such that $l_1, l_2 \in \mathbb{R}_{\geq 0}$ lying on the line $A'B'$ that passes through the point of contact $C \triangleq (0, 0, 0)$.\footnote{If needed, the collision detection may be performed according to the methodology suggested in Ref. \cite{pfeiffer_multibody_2004}; the geometric constraints imposed on the bodies ensure that there is at most one point of contact.} The velocity fields of both bodies are assumed to be uniform and parallel to this line. The configuration, as hereinbefore described, corresponds to a binary direct collinear impact (e.g., see Ref. \cite{stronge_impact_2018}). Building upon the methodology proposed in Ref. \cite{stronge_impact_2018}, it shall be assumed that while the bodies remain in contact, the motion of the system is governed by the laws of rigid body dynamics (Newton \cite{newton_mathematical_1729}), with the contact point described as an infinitesimal deformable particle \cite{stronge_impact_2018}.\footnote{It should be noted that similar modeling frameworks have been used since the eighteenth century (e.g., see Ref. \cite{euler_force_1746}).} The contact force $\mathbf{F} \triangleq F \hat{\mathbf{n}}_1$ with $F : \mathbb{R}_{\geq 0} \longrightarrow \mathbb{R}$ being continuous is applied to $\mathcal{B}_1$ at the point $C_1$, which is a point of $\mathcal{B}_1$ that coincides with $C$ at the time of contact. Then, the corresponding force that acts on the body $\mathcal{B}_2$ will be $-\mathbf{F} \triangleq -F \hat{\mathbf{n}}_1$, applied at the point $C_2$ coincident with the point $C$ at the time of contact. It is also assumed that the resultant of the external forces that act on the body $\mathcal{B}_i$\footnote{In what follows, it shall always be assumed that $i$ ranges over the set $\{ 1, 2 \}$.} is $\mathbf{u}_i \triangleq u_i \hat{\mathbf{n}}_1$ with $u_i : \mathbb{R}_{\geq 0} \longrightarrow \mathbb{R}$ being continuous. The point of application of $\mathbf{u}_i$ (with reference to $\mathcal{B}_i$) will be $E_i \triangleq (x_i^e, 0, 0)$ with $x_i^e \in \mathbb{R}$. Depending on the context, it will be assumed that $u_i$ belongs either to $\mathcal{U}_1$, the space of all continuous functions with the domain $\mathbb{R}_{\geq 0}$ and the codomain $\mathbb{R}$ such that $\lVert u \rVert_1 \triangleq \int_0^{+\infty} \abs{u(s)} ds < +\infty$ for all $u \in \mathcal{U}_1$, or $\mathcal{U}_{\infty}$, the space of all continuous bounded functions with the domain $\mathbb{R}_{\geq 0}$ and the codomain $\mathbb{R}$. $\mathcal{U}$ will be used to denote either $\mathcal{U}_1$ or $\mathcal{U}_{\infty}$.

Given the aforementioned assumptions, the motion of the system will be confined to a single dimension, and the two coordinates $x_{m,1} \in \mathbb{R}$ and $x_{m,2} \in \mathbb{R}$ such that $\mathbf{r}_{G_1/C} \triangleq x_{m,1} \hat{\mathbf{n}}_1$ and $\mathbf{r}_{G_2/C} \triangleq x_{m,2} \hat{\mathbf{n}}_1$ suffice to describe the dynamics of the system. More specifically, the motion can be described by the following Initial Value Problem (IVP):
\begin{equation}\label{eq:primary}
\begin{cases}
\ddot{x}_{m,1} = m_1^{-1} F + m_1^{-1} u_1 & x_{m,1}(0) = l_1, \dot{x}_{m,1}(0) = v_{m,1,0} \\
\ddot{x}_{m,2} = -m_2^{-1} F + m_2^{-1} u_2 & x_{m,2}(0) = -l_2, \dot{x}_{m,2}(0) = v_{m,2,0} 
\end{cases}
\end{equation}
where $l_1, l_2 \in \mathbb{R}_{\geq 0}$ and $v_{m,1,0}, v_{m,2,0} \in \mathbb{R}$ are parameters such that $-(v_{m,1,0} - v_{m,2,0}) \in \mathbb{R}_{>0}$.

Defining $x_1 \triangleq x_{m,1} - l_1$, $x_2 \triangleq x_{m,2} + l_2$, $v_{1, 0} \triangleq v_{m,1,0}$, and $v_{2, 0} \triangleq v_{m,2,0}$, the IVP given by Eq. \eqref{eq:primary} can be restated as 
\begin{equation}\label{eq:primary_contact}
\begin{cases}
\ddot{x}_1 = m_1^{-1} F + m_1^{-1} u_1 & x_1(0) = 0, \dot{x}_1(0) = v_{1,0} \\
\ddot{x}_2 = -m_2^{-1} F + m_2^{-1} u_2 & x_2(0) = 0, \dot{x}_2(0) = v_{2,0} 
\end{cases}
\end{equation}
This IVP describes the evolution of the contact point of each body. 

Denoting 
\begin{equation}
m \triangleq \frac{m_1 m_2}{m_1 + m_2}
\end{equation}
\begin{equation} 
x \triangleq x_1 - x_2
\end{equation}
\begin{equation} 
v \triangleq \dot{x} = \dot{x}_1 - \dot{x}_2
\end{equation}
\begin{equation}
v_0 \triangleq -(v_{1,0} - v_{2,0})
\end{equation}
\begin{equation}\label{eq:u}
u \triangleq \frac{m_2 u_1 - m_1 u_2}{m_1 + m_2}
\end{equation}
the equations of motion can be transformed to
\begin{equation}\label{eq:main}
\begin{cases}
\dot{x} = v & x(0) = 0 \\
\dot{v} = m^{-1} F + m^{-1} u & v(0) = -v_0 \\
\end{cases}
\end{equation}
It should be noted that $m \in \mathbb{R}_{>0}$ describes the effective mass of the colliding bodies (e.g., see Ref. \cite{nikravesh_determination_2023}). As previously, $x$ will be referred to as the relative displacement and $v$ as the relative velocity. It should be noted that if (by abuse of notation) $m_2 = +\infty$, then $m^{-1} = m_1^{-1}$ and $u = u_1$. This situation corresponds to the collision of a body $\mathcal{B}_1$ of finite mass with a stationary body $\mathcal{B}_2$.

Assuming (global) existence and uniqueness of solutions of the IVP given by Eq. \eqref{eq:main} on a non-degenerate time interval $I \subseteq \mathbb{R}_{\geq 0}$ with $0 \in I$, the time of the separation $t_s \in \mathbb{R}_{> 0} \cup \{ + \infty \}$ is defined as
\begin{equation}\label{eq:main_t_s}
t_s \triangleq \inf \{ t \in I_{\geq 0} : F(t) \leq 0 \leq v(t) \}
\end{equation}
for any given solution. The duration of the collision will be denoted $t_d \in \mathbb{R}_{\geq 0}$ and defined as $t_d \triangleq t_s - t_0$. Therefore, in the context of this study, $t_d = t_s$. Under the same assumptions, the (kinematic) Coefficient of Restitution (CoR) $e \in \mathbb{R}$ is given by\footnote{See Ref. \cite{stronge_impact_2018} for a conceptual description of the kinematic coefficient of restitution, which is usually attributed to Sir Isaac Newton \cite{newton_mathematical_1729}.}
\begin{equation}\label{eq:main_cor}
e \triangleq 
\begin{cases}
-v(t_s)/v(0) & t_s \neq +\infty \\
0 & t_s = +\infty
\end{cases}
\end{equation}

For some applications, it may also be of interest to understand the evolution of the location of the center of mass of the entire system. First, introduce the parameters $\eta, \eta_c \in (0, 1)$ given by 
\begin{equation}
\eta \triangleq \frac{m_1}{m_1 + m_2}
\end{equation}
\begin{equation}
\eta_c \triangleq 1 - \eta = \frac{m_2}{m_1 + m_2}
\end{equation}
The location of the center of mass is given by 
\begin{equation}
\mathbf{r}_{G/C} = \eta \mathbf{r}_{G_1/C} + \eta_c \mathbf{r}_{G_2/C} \triangleq x_m \hat{\mathbf{n}}_1
\end{equation}
Then,
\begin{equation}\label{eq:x_m}
x_m \triangleq \eta x_{m,1} + \eta_c x_{m,2} = \eta x_1 + \eta_c x_2 + \eta l_1 - \eta_c l_2
\end{equation}
Introducing $v_m \triangleq \dot{x}_m$, the IVP associated with the evolution of the location of the center of mass of the system is given by
\begin{equation}\label{eq:com}
\begin{cases}
\dot{x}_m = v_m & x_m (0) = x_{m,0} \\
\dot{v}_m = (m_1 + m_2)^{-1} (u_1 + u_2) & v_m (0) = v_{m,0}
\end{cases}
\end{equation}
where $x_{m,0} \in \mathbb{R}$ and $v_{m,0} \in \mathbb{R}$ are given by
\begin{equation}\label{eq:x_m_0}
x_{m,0} \triangleq \eta l_1 - \eta_c l_2
\end{equation}
\begin{equation}\label{eq:v_m_0}
v_{m,0} = \eta v_{m,1,0} + \eta_c v_{m,2,0}
\end{equation}
respectively. Once $x$ and $x_m$ are known, the evolution of the locations of the centers of mass of the individual bodies can be recovered via
\begin{equation}\label{eq:x_m_1}
x_{m,1} = x_m + \eta_c x + \eta_c (l_1 + l_2)\\
\end{equation}
\begin{equation}\label{eq:x_m_2}
x_{m,2} = x_m - \eta x - \eta (l_1 + l_2)\\
\end{equation}
Then, the velocities of the centers of mass of the bodies can be obtained via 
\begin{equation}\label{eq:v_m_1}
v_{m,1} \triangleq \dot{x}_{m,1} = v_m + \eta_c v
\end{equation}
\begin{equation}\label{eq:v_m_2}
v_{m,2} \triangleq \dot{x}_{m,2} = v_m - \eta v
\end{equation}

Introduction of the variables $T \triangleq t/T_c$, $X \triangleq x/X_c$, $X_m \triangleq (x_m - x_{m,0})/X_c$, $X_{m,1} \triangleq (x_{m,1} - l_1)/X_c$, $X_{m,2} \triangleq (x_{m,2} + l_2)/X_c$, $V \triangleq v/(X_c/T_c)$, $V_m \triangleq v_m/(X_c/T_c)$, $V_{m,1} \triangleq v_{m, 1}/(X_c/T_c)$, $V_{m,2} \triangleq v_{m,2}/(X_c/T_c)$ using the parameters listed in Table \ref{tab:ND} and nondimensionalization of Eq. \eqref{eq:com} using the methodology presented in Ref. \cite{logan_applied_2013} results in the system
\begin{equation}\label{eq:COM}
\begin{cases}
\dot{X}_m = V_m & X_m (0) = 0 \\
\dot{V}_m = U_1 + U_2 & V_m (0) = V_{m,0}
\end{cases}
\end{equation}
where $V_{m,0} \in \mathbb{R}$ and $U_1, U_2 : \mathbb{R}_{\geq 0} \longrightarrow \mathbb{R}$ are given by
\begin{equation}
V_{m,0} \triangleq \eta \frac{v_{m,1,0}}{v_0} + \eta_c \frac{v_{m,2,0}}{v_0} 
\end{equation}
\begin{equation}
U_1(T) \triangleq \frac{T_c^2}{X_c} \frac{1}{m_1 + m_2} u_1 (T_c T)
\end{equation}
\begin{equation}
U_2(T) \triangleq \frac{T_c^2}{X_c} \frac{1}{m_1 + m_2} u_2 (T_c T)
\end{equation}
respectively. Then,
\begin{equation}\label{eq:X_m_1}
X_{m,1} \triangleq X_m + \eta_c X
\end{equation}
\begin{equation}\label{eq:X_m_2}
X_{m,2} \triangleq X_m - \eta X
\end{equation}
\begin{equation}\label{eq:V_m_1}
V_{m,1} \triangleq V_m + \eta_c V
\end{equation}
\begin{equation}\label{eq:V_m_2}
V_{m,2} \triangleq V_m - \eta V
\end{equation}
It should be remarked that, without knowing $u_1$ and $u_2$, natural scales for variables that appear in Eqs. \eqref{eq:com} - \eqref{eq:v_m_2} can hardly be established. The scales $T_c$ and $X_c$ from the BWSHCCM/BWMCM were adopted for compatibility.

\begin{figure}
\centering
\begin{tikzpicture}[>={Stealth[scale=0.6]}]

	\coordinate (O) at (0, 0);
	\coordinate (G1) at (0, 0.85);
	\coordinate (G2) at (0, -1);
	\coordinate (E1) at (0, 1.25);
	\coordinate (E2) at (0, -2);

    \draw[fill=black](0,0) circle (1 pt) node [above  right] {$C$};
    \draw[fill=black](G1) circle (1 pt) node [left] {$G_1$};
    \draw[fill=black](G2) circle (1 pt) node [left] {$G_2$};
    \draw[fill=black](E1) circle (1 pt) node [left] {$E_1$};
    \draw[fill=black](E2) circle (1 pt) node [left] {$E_2$};

    \draw[->, line width=1](3,-3) -- (3,-2) node [left] {$\hat{\mathbf{n}}_1$};
    \draw[->, line width=1](3,-3) -- (2,-3) node [above] {$\hat{\mathbf{n}}_2$};

    \draw[dotted, line width=1](-2.5,0) -- (2.5,0);
    \draw[dotted, line width=1](0,2) node [right] {$A'$} -- (0,-3.5) node [right] {$B'$};    

    \draw[dotted, line width=1](G1) -- ($(G1) + (2.5, 0)$);
    \draw[dotted, line width=1](G2) -- ($(G2) + (2.5, 0)$);
    \draw[<->, dotted, line width=1]($(G1) + (2, 0)$) -- (2, 0) node [midway, right] {$l_1$};
    \draw[<->, dotted, line width=1]($(G2) + (2, 0)$) -- (2, 0) node [midway, right] {$l_2$};

    \draw[->, line width=1](O) -- (0,0.5) node [left] {$\mathbf{F}$};
    \draw[->, line width=1](O) -- (0,-0.5) node [left] {$\mathbf{F}$};

    \draw[->, line width=1](E1) -- ($(E1) + (0, 0.5)$) node [left] {$\mathbf{u_1}$};
    \draw[->, line width=1](E2) -- ($(E2) + (0, 0.75)$) node [right] {$\mathbf{u_2}$};
    
    \draw[->, line width=0.75, dashed](-2,1.25) node [left] {$\mathcal{B}_1$} -- (-1.25,1);
    \draw[->, line width=0.75, dashed](-1.5,-1.5) node [left] {$\mathcal{B}_2$} -- (-0.5,-1.5);
    
    \draw plot [smooth cycle, tension=.8] coordinates {(0,0) (1.5,1) (0,1.5) (-1.5,1)};
    \draw plot [smooth cycle, tension=.8] coordinates {(0,0) (0.75,-2) (0,-3) (-0.75,-2)};
    
    \draw (0,0.2) -- (-0.2,0.2) -- (-0.2,0);
    \draw (3,-2.8) -- (2.8,-2.8) -- (2.8,-3);

\end{tikzpicture}
\caption{Bodies $\mathcal{B}_1$ and $\mathcal{B}_2$ at the time of the collision}\label{fig:nc}
\end{figure}

\section{The Bouc-Wen-Simon-Hunt-Crossley Collision Model}\label{sec:BWSHCCM}

Taking into account the amendments to the model of the physical system presented in Eq. \eqref{eq:main}, the BWSHCCM is stated as
\begin{equation}\label{eq:EBWSHCCM}
\begin{cases}
\dot{x} = v \\
\dot{z} = A v - \beta \abs{z}^{n - 1} z \abs{v} - \gamma \abs{z}^n v \\
\dot{v} = - \alpha \frac{k}{m} \abs{x}^{p - 1} x - \alpha_c \frac{k}{m} \abs{z}^{p - 1} z - \frac{c}{m} \abs{x}^p v + \frac{1}{m} u \\
\begin{matrix} x(0) = 0, & z(0) = 0, & v(0) = -v_0 \end{matrix}
\end{cases}
\end{equation}
The contact force $F : \mathbb{R}^3 \longrightarrow \mathbb{R}$ is given by
\begin{equation}
F(x, z, v) \triangleq - \alpha k \abs{x}^{p - 1} x - \alpha_c k \abs{z}^{p - 1} z - c \abs{x}^p v
\end{equation}
for all $x, z, v \in \mathbb{R}$. Then, given a solution of the BWSHCCM, the time of the separation and the coefficient of restitution can be found via Eq. \eqref{eq:main_t_s} and Eq. \eqref{eq:main_cor}, respectively.

Introduction of the function $U : \mathbb{R}_{\geq 0} \longrightarrow \mathbb{R}$ given by 
\begin{equation}\label{eq:EBWSHCCM_U_nd}
U(T) \triangleq \left( \frac{1}{\alpha + \alpha_c A^p} \right)^{\frac{1}{p + 1}} (m^p k)^{-\frac{1}{p + 1}} v_0^{-\frac{2 p}{p + 1}} u(T_c T)
\end{equation}
for all $T \in \mathbb{R}_{\geq 0}$, and nondimensionalization of the BWSHCCM using the methodology presented in Ref. \cite{logan_applied_2013} and the parameters listed in Table \ref{tab:ND} results in the new form of the NDBWSHCCM:
\begin{equation}\label{eq:EBWSHCCM_nd}
\begin{cases}
\dot{X} = V \\
\dot{Z} = V - B \abs{Z}^{n - 1} Z \abs{V} - \mathit{\Gamma} \abs{Z}^n V \\
\dot{V} = - \kappa \abs{X}^{p - 1} X - \kappa_c \abs{Z}^{p - 1} Z - \sigma \abs{X}^p V + U \\
\begin{matrix} X(0) = 0, & Z(0) = 0, & V(0) = -1 \end{matrix}
\end{cases}
\end{equation}
It should be remarked that $U \in \mathcal{U}_1$ if $u \in \mathcal{U}_1$ and $U \in \mathcal{U}_{\infty}$ if $u \in \mathcal{U}_{\infty}$.

Under the assumption that $B \in \mathbb{R}_{\geq 0}$, $\mathit{\Gamma} \in [-B, B]$, $\kappa \in (0, 1)$, $\sigma \in \mathbb{R}_{\geq 0}$, $n, p \in \mathbb{R}_{\geq 1}$, and $U \in \mathcal{U}_1$, the NDBWSHCCM has a unique bounded (forward in time) solution that can be extended to infinity (see Appendix \ref{sec:ABWSHCCM}). If $U \in \mathcal{U}_{\infty}$, then any restriction of $U$ to $[0, T]$ with $T \in \mathbb{R}_{\geq 0}$ can be continued to a signal in $\mathcal{U}_1$. Thus, global existence, uniqueness, and boundedness of solutions of the NDBWSHCCM for $U \in \mathcal{U}_1$ imply global existence and uniqueness of solutions of the NDBWSHCCM for any $U \in \mathcal{U}_{\infty}$.

In Ref. \cite{milehins_boucwen_2025}, the authors provide a relationship that describes the dependence of the parameters of the NDBWSHCCM on $v_0$, which can be useful for applications in model parameter identification studies (see Sec. \ref{sec:MPI}). In this study, it will be assumed that the parameters depend not only on $v_0$, but also on $u$. The new relationship can be described by the function $\mathcal{P} : \mathbb{P}^{*} \times \mathcal{U} \times \mathbb{R}_{>0}  \longrightarrow \mathbb{P} \times \mathcal{U}$ that maps $P^{*} = (B_b, \mathit{\Gamma}_b, \kappa, \sigma_b, n, p, U_b, T_b) \in \mathbb{P}^{*}$, $u \in \mathcal{U}$ and $v_0 \in \mathbb{R}_{> 0}$ to 
\[
\left( B_b v_0^{\frac{2 n}{p + 1}}, \mathit{\Gamma}_b v_0^{\frac{2 n}{p + 1}}, \kappa, \sigma_b v_0, n, p, U'(P^{*}, u, v_0, \cdot) \right) \in \mathbb{P} \times \mathcal{U}
\] 
where $\mathbb{P}^{*} \subseteq \mathbb{R}^8$ consist of all $P^{*} = (B_b, \mathit{\Gamma}_b, \kappa, \sigma_b, n, p, U_b, T_b)$ such that $B_b \in \mathbb{R}_{\geq 0}$, $\mathit{\Gamma}_b \in [-B_b, B_b]$, $\kappa \in (0, 1)$, $\sigma_b \in \mathbb{R}_{\geq 0}$, $n, p \in \mathbb{R}_{\geq 1}$, $U_b, T_b \in \mathbb{R}_{>0}$, $\mathbb{P} \subseteq \mathbb{R}^6$ consists of all admissible parameters $P = (B, \mathit{\Gamma}, \kappa, \sigma, n, p)$ of the NDBWSHCCM, and $U' : \mathbb{P}^{*} \times \mathcal{U} \times \mathbb{R}_{>0} \times \mathbb{R}_{\geq 0} \longrightarrow \mathbb{R}$ is defined via
\begin{equation}
U'(P^{*}, u, v_0, T) \triangleq U_b v_0^{-\frac{2 p }{p + 1}} u \left( T_b v_0^{-\frac{p - 1}{p + 1}} T \right)
\end{equation}
for all $P^{*} \in \mathbb{P}^{*}$, $u \in \mathcal{U}$, $v_0 \in \mathbb{R}_{>0}$ and $T \in \mathbb{R}_{\geq 0}$ such that $p = P_6^{*}$, $U_b = P_7^{*}$, and $T_b = P_8^{*}$. The members of $\mathbb{P}^{*}$ will be referred to as the base parameters: they are merely convenient abstractions for the study of the behavior of a given physical system represented by the NDBWSHCCM with respect to the changes in the initial relative velocity and inputs (e.g., see Sec. \ref{sec:MPI}). Thus, $\mathcal{P}$ maps the base parameters and inputs of the BWSHCCM to the admissible parameters and the admissible inputs of the NDBWSHCCM.

The functions that represent the relationship between the parameters and various physical quantities of interest are also updated. Thus, $\Phi : \mathbb{P} \times \mathcal{U} \times \mathbb{R}_{\geq 0} \longrightarrow \mathbb{R}^3$ is defined in a manner such that $\Phi_{P, U} (T)$ represents the value of the state of the NDBWSHCCM parameterized by $P \in \mathbb{P}$ and the input $U \in \mathcal{U}$ at the time $T \in \mathbb{R}_{\geq 0}$; the contact force $F : \mathbb{P} \times \mathbb{R}^3 \longrightarrow \mathbb{R}$ shall be defined as
\begin{equation}
F_P(X, Z, V) \triangleq - \kappa \abs{X}^{p - 1} X - \kappa_c \abs{Z}^{p - 1} Z - \sigma \abs{X}^p V
\end{equation}
for any $(X, Z, V) \in \mathbb{R}^3$ and $P \in \mathbb{P}$ such that $\kappa = P_3$, $\sigma = P_4$, and $p = P_6$; the time of the separation $T_s : \mathbb{P} \times \mathcal{U} \longrightarrow \mathbb{R}_{> 0} \cup \{ +\infty \}$ shall be defined as
\begin{equation}
T_s (P, U) \triangleq \inf \{ T \in \mathbb{R}_{\geq 0} : F_P(\Phi_{P, U}(T)) \leq 0 \leq \Phi_{P, U, 3}(T) \}
\end{equation}
for all $P \in \mathbb{P}$ and $U \in \mathcal{U}$; as explained previously, in the context of this study, the duration of the collision $T_d$ is equivalent to the time of the separation $T_s$; CoR $e : \mathbb{P} \times \mathcal{U} \longrightarrow \mathbb{R}$ shall be defined as
\begin{equation}
e(P, U) \triangleq 
\begin{cases}
\Phi_{P,U,3}(T_s(P, U)) & T_s(P, U) \neq +\infty \\
0 & T_s(P, U) = +\infty
\end{cases}
\end{equation}
for all $P \in \mathbb{P}$ and $U \in \mathcal{U}$.

\section{The Bouc-Wen-Maxwell Collision Model}\label{sec:BWMCM}

Taking into account the amendments to the model of the physical system in Eq. \eqref{eq:main}, the BWMCM can be stated as
\begin{equation}\label{eq:EBWMCM}
\begin{cases}
\dot{r} = \alpha \frac{k}{c} \abs{y}^{p - 1} y + \alpha_c \frac{k}{c} \abs{z}^{p - 1} z\\
\dot{y} = - \alpha \frac{k}{c} \abs{y}^{p - 1} y - \alpha_c \frac{k}{c} \abs{z}^{p - 1} z + v\\
\dot{z} = A \dot{y} - \beta \abs{z}^{n - 1} z \abs{\dot{y}} - \gamma \abs{z}^n \dot{y} \\
\dot{v} = - \alpha \frac{k}{m} \abs{y}^{p - 1} y - \alpha_c \frac{k}{m} \abs{z}^{p - 1} z + \frac{1}{m} u \\
\begin{matrix} r(0) = y(0) = z(0) = 0, & v(0) = -v_0 \end{matrix}
\end{cases}
\end{equation}
Then, the relative displacement of the colliding bodies can be recovered via $x = r + y$.
The contact force $F : \mathbb{R}^4 \longrightarrow \mathbb{R}$ is given by 
\begin{equation}
F(r, y, z, v) \triangleq - \alpha k \abs{y}^{p - 1} y - \alpha_c k \abs{z}^{p - 1} z
\end{equation}
for all $r, y, z, v \in \mathbb{R}$. Then, given a solution of the BWMCM, the time of the separation and the coefficient of restitution can be found via Eq. \eqref{eq:main_t_s} and Eq. \eqref{eq:main_cor}, respectively.

It should be noted that the form of the BWMCM given by Eq. \eqref{eq:EBWMCM} differs from the form of the BWMCM that was employed in Ref. \cite{milehins_boucwen_2025} and given by Eq. \eqref{eq:BWMCM}. However, these two forms are equivalent. Suppose that the output function of the model described by Eq. \eqref{eq:EBWMCM} is given by
\begin{equation}\label{eq:EBWMCM_output}
(r, y, z, v) \mapsto (r, y, z, \dot{y}, r + y, v) \triangleq (r, y, z, w, x, v)
\end{equation}
Suppose also that the effects of external forces are ignored ($u = 0$) in Eq. \eqref{eq:EBWMCM}. Then, the model given by Eq. \eqref{eq:BWMCM} and Eq. \eqref{eq:BWMCM_output}, and the model given by Eq. \eqref{eq:EBWMCM} and Eq. \eqref{eq:EBWMCM_output} yield identical outputs. The primary advantage of the form of the model given by Eq. \eqref{eq:EBWMCM} is that the state function associated with the model is locally Lipschitz continuous in the state variables and continuous in the time variable under all admissible parameterizations with $p \in \mathbb{R}_{\geq 1}$. This guarantees uniqueness (and local existence) of the solutions (and, thence, the output) of both models.\footnote{The choice of the form of the model in Ref. \cite{milehins_boucwen_2025} was suboptimal for the target applications.} Similar results can be established for the nondimensionalized model (see below).

Introduction of the function $U : \mathbb{R}_{\geq 0} \longrightarrow \mathbb{R}$ given by
\begin{equation}\label{eq:EBWMCM_U_nd}
U(T) \triangleq \left( \frac{1}{\alpha + \alpha_c A^p} \right)^{\frac{1}{p + 1}} (m^p k)^{-\frac{1}{p + 1}} v_0^{-\frac{2 p }{p + 1}} u(T_c T)
\end{equation}
for all $T \in \mathbb{R}_{\geq 0}$, and nondimensionalization of the BWMCM using the methodology presented in Ref. \cite{logan_applied_2013} and the parameters listed in Table \ref{tab:ND} results in the new form of the NDBWMCM:
\begin{equation}\label{eq:EBWMCM_nd}
\begin{cases}
\dot{R} = \kappa \sigma \abs{Y}^{p - 1} Y + \kappa_c \sigma \abs{Z}^{p - 1} Z\\
\dot{Y} = - \kappa \sigma \abs{Y}^{p - 1} Y - \kappa_c \sigma  \abs{Z}^{p - 1} Z + V\\
\dot{Z} = \dot{Y} - B \abs{Z}^{n - 1} Z \abs{\dot{Y}} - \mathit{\Gamma} \abs{Z}^n \dot{Y} \\
\dot{V} = - \kappa \abs{Y}^{p - 1} Y - \kappa_c \abs{Z}^{p - 1} Z + U \\
\begin{matrix} R(0) = Y(0) = Z(0) = 0, & V(0) = -1 \end{matrix}
\end{cases}
\end{equation}
Then, the relative displacement of the colliding bodies can be recovered via $X = R + Y$. It should be remarked that $U \in \mathcal{U}_1$ if $u \in \mathcal{U}_1$ and $U \in \mathcal{U}_{\infty}$ if $u \in \mathcal{U}_{\infty}$.

Under the assumption that $B \in \mathbb{R}_{\geq 0}$, $\mathit{\Gamma} \in [-B, B]$, $\kappa \in (0, 1)$, $\sigma \in \mathbb{R}_{> 0}$, $n, p \in \mathbb{R}_{\geq 1}$, and $U \in \mathcal{U}_1$, the NDBWMCM has unique bounded (forward in time) solutions that can be extended to infinity (see Appendix \ref{sec:ABWMCM}). If $U \in \mathcal{U}_{\infty}$, then any restriction of $U$ to $[0, T]$ with $T \in \mathbb{R}_{\geq 0}$ can be continued to a signal in $\mathcal{U}_1$. Thus, global existence, uniqueness, and boundedness of solutions of the NDBWMCM for $U \in \mathcal{U}_1$ imply global existence and uniqueness of solutions of the NDBWMCM for any $U \in \mathcal{U}_{\infty}$.

The new relationship between the parameters of the NDBWMCM, $u$ and $v_0$ (see Sec. \ref{sec:BWSHCCM}) is described by the function $\mathcal{P} : \mathbb{P}^{*} \times \mathcal{U} \times \mathbb{R}_{>0}  \longrightarrow \mathbb{P} \times \mathcal{U}$ that maps $P^{*} = (B_b, \mathit{\Gamma}_b, \kappa, \sigma_b, n, p, U_b, T_b) \in \mathbb{P}^{*}$, $u \in \mathcal{U}$ and $v_0 \in \mathbb{R}_{> 0}$ to 
\begin{equation}
\left( B_b v_0^{\frac{2 n}{p + 1}}, \mathit{\Gamma}_b v_0^{\frac{2 n}{p + 1}}, \kappa, \sigma_b v_0^{\frac{p - 1}{p + 1}}, n, p, U'(P^{*}, u, v_0, \cdot) \right) \in \mathbb{P} \times \mathcal{U}
\end{equation}
where $\mathbb{P}^{*} \subseteq \mathbb{R}^8$ consist of all $P^{*} = (B_b, \mathit{\Gamma}_b, \kappa, \sigma_b, n, p, U_b, T_b)$ such that $B_b \in \mathbb{R}_{\geq 0}$, $\mathit{\Gamma}_b \in [-B_b, B_b]$, $\kappa \in (0, 1)$, $\sigma_b \in \mathbb{R}_{> 0}$, $n, p \in \mathbb{R}_{\geq 1}$, $U_b, T_b \in \mathbb{R}_{>0}$, $\mathbb{P} \subseteq \mathbb{R}^6$ consists of all admissible parameters $P = (B, \mathit{\Gamma}, \kappa, \sigma, n, p)$ of the NDBWMCM, and $U' : \mathbb{P}^{*} \times \mathcal{U} \times \mathbb{R}_{>0} \times \mathbb{R}_{\geq 0} \longrightarrow \mathbb{R}$ is given by 
\begin{equation}
U'(P^{*}, u, v_0, T) \triangleq U_b v_0^{-\frac{2 p }{p + 1}} u \left( T_b v_0^{-\frac{p - 1}{p + 1}} T \right)
\end{equation}
for all $P^{*} \in \mathbb{P}^{*}$, $u \in \mathcal{U}$, $v_0 \in \mathbb{R}_{>0}$ and $T \in \mathbb{R}_{\geq 0}$ such that $p = P_6^{*}$, $U_b = P_7^{*}$, and $T_b = P_8^{*}$. 

Furthermore, $\Phi : \mathbb{P} \times \mathcal{U} \times \mathbb{R}_{\geq 0} \longrightarrow \mathbb{R}^4$ is defined in a manner such that $\Phi_{P, U} (T)$ represents the value of the state of the NDBWMCM parameterized by $P \in \mathbb{P}$ and the input $U \in \mathcal{U}$ at the time $T \in \mathbb{R}_{\geq 0}$; the contact force $F : \mathbb{P} \times \mathbb{R}^4 \longrightarrow \mathbb{R}$ shall be defined as
\begin{equation}
F_P(R, Y, Z, V) \triangleq - \kappa \abs{Y}^{p - 1} Y - \kappa_c \abs{Z}^{p - 1} Z
\end{equation}
for any $(R, Y, Z, V) \in \mathbb{R}^4$ and $P \in \mathbb{P}$ such that $\kappa = P_3$ and $p = P_6$; the time of the separation $T_s : \mathbb{P} \times \mathcal{U} \longrightarrow \mathbb{R}_{> 0} \cup \{ +\infty \}$ shall be defined as
\begin{equation}
T_s (P, U) \triangleq \inf \{ T \in \mathbb{R}_{\geq 0} : F_P(\Phi_{P, U}(T)) \leq 0 \leq \Phi_{P,U,4}(T) \}
\end{equation}
for all $P \in \mathbb{P}$ and $U \in \mathcal{U}$; as explained previously, in the context of this study, the duration of the collision $T_d$ is equivalent to the time of the separation $T_s$; CoR $e : \mathbb{P} \times \mathcal{U} \longrightarrow \mathbb{R}$ shall be defined as
\begin{equation}
e(P, U) \triangleq 
\begin{cases}
\Phi_{P,U,4}(T_s(P, U)) & T_s(P, U) \neq +\infty \\
0 & T_s(P, U) = +\infty
\end{cases}
\end{equation}
for all $P \in \mathbb{P}$ and $U \in \mathcal{U}$.

\section{Application Example}\label{sec:AE}

\begin{figure}
\centering
\begin{tikzpicture}[>={Stealth[scale=0.6]}]

	\coordinate (O) at (0, 0);
	\coordinate (G1) at (0, 1.5);
	\coordinate (G2) at (0, -1.5);

    \draw[fill=black](0,0) circle (1 pt) node [above  right] {$C$};
    \draw[fill=black](G1) circle (1 pt) node [left] {$G_1$};
    \draw[fill=black](G2) circle (1 pt) node [left] {$G_2$};

    \draw[->, line width=1](3,2) -- (3,3) node [left] {$\hat{\mathbf{n}}_1$};
    \draw[->, line width=1](3,2) -- (2,2) node [above] {$\hat{\mathbf{n}}_2$};

    \draw[dotted, line width=1](-2.5,0) -- (2.5,0);
    \draw[dotted, line width=1](0,3.5) node [right] {$A'$} -- (0,-3.5) node [right] {$B'$};    
    
    \draw[->, line width=1](O) -- (0,0.5) node [left] {$\mathbf{F}$};
    \draw[->, line width=1](O) -- (0,-0.5) node [left] {$\mathbf{F}$};
    
    \draw[->, line width=1](G1) -- ($(G1) + (0, 0.5)$) node [right] {$\mathbf{u}_1$};   
    \draw[->, line width=1](G2) -- ($(G2) + (0, 0.75)$) node [right] {$\mathbf{u}_2$};
        
    \draw[->, line width=0.75, dashed](-2,1.25) node [left] {$\mathcal{B}_1$} -- (-1,1.5);
    \draw[->, line width=0.75, dashed](-2,-1.25) node [left] {$\mathcal{B}_2$} -- (-1,-1.5);

    \draw[line width=1] (-1.5,3.5) -- (-1.5,-3.5);
    \draw[line width=1] (1.5,3.5) -- (1.5,-3.5);
    
	\draw (0, 1.5) circle (1.5);
	\draw (0, -1.5) circle (1.5);
    
    \draw (0,0.2) -- (-0.2,0.2) -- (-0.2,0);
    \draw (3,2.2) -- (2.8,2.2) -- (2.8,2);

\end{tikzpicture}
\caption{System diagram for Section \ref{sec:AE}}\label{fig:sd}
\end{figure}

\begin{table}[t]
\caption{Two balls colliding under the action of external forces: parameterization of the BWSHCCM and the BWMCM\label{tab:AE}}
\centering{
\begin{tabular}{c c c}
\toprule
Parameter & BWSHCCM & BWMCM \\
\midrule
$g$      & $9.81 \; (\text{m} \; \text{s}^{-2})$                              & $9.81 \; (\text{m} \; \text{s}^{-2})$\\
$\omega$ & $10^3 \; (\text{s}^{-1})$                                       & $10^3 \; (\text{s}^{-1})$\\
$D$      & $250 \; (\text{m} \; \text{s}^{-2})$                           & $250 \; (\text{m} \; \text{s}^{-2})$\\
$a$      & $0.05 \; (\text{m})$                                            & $0.05 \; (\text{m})$\\
$\rho_1$ & $3 \times 10^3 \; (\text{kg} \; \text{m}^{-3})$                 & $3 \times 10^3 \; (\text{kg} \; \text{m}^{-3})$\\
$\rho_2$ & $10^3 \; (\text{kg} \; \text{m}^{-3})$                          & $10^3 \; (\text{kg} \; \text{m}^{-3})$\\
$k$      & $10^6 \; (\text{kg} \; \text{m}^{-0.5} \; \text{s}^{-2})$     & $10^6 \; (\text{kg} \; \text{m}^{-0.5} \; \text{s}^{-2})$ \\
$c$      & $10^3  \; (\text{kg} \; \text{m}^{-1.5} \; \text{s}^{-1})$      & $10^3 \; (\text{kg} \; \text{s}^{-1})$ \\
$n$      & $1.5 \; (\text{-})$                                             & $1.5 \; (\text{-})$ \\
$p$      & $1.5 \; (\text{-})$                                             & $1.5 \; (\text{-})$ \\
$\alpha$ & $0.5 \; (\text{-})$                                             & $0.5 \; (\text{-})$ \\
$\beta$  & $10^4 \; (\text{m}^{-1.5})$                                     & $10^4 \; (\text{m}^{-1.5})$ \\
$\gamma$ & $-5 \times 10^3 \; (\text{m}^{-1.5})$                           & $-5 \times 10^3 \; (\text{m}^{-1.5})$ \\
$A$      & $1 \; (\text{-})$                                               & $1 \; (\text{-})$ \\
$v_{m,1,0}$  & $-0.5 \; (\text{m} \; \text{s}^{-1})$                       & $-0.5 \; (\text{m} \; \text{s}^{-1})$ \\
$v_{m,2,0}$  & $1.5 \; (\text{m} \; \text{s}^{-1})$                        & $1.5 \; (\text{m} \; \text{s}^{-1})$ \\
\bottomrule
\end{tabular}
}
\end{table}

\begin{figure*}[t!]
\begin{subfigure}[t]{0.475\textwidth}
\centering{
    \begin{tikzpicture}
	\definecolor{clr4}{RGB}{0,0,0}
	\begin{axis}[xlabel={$t \; (\text{s})$}, ylabel={$\text{displacement} \; (\text{m})$}, xticklabel style={/pgf/number format/precision=2, /pgf/number format/fixed}, legend style={at={(0.5,1.01)}, anchor=south, font=\small, legend columns=-1}, line width=1pt, height=1.25in, width=\linewidth]
	\addplot [clr4, mark=none, dotted] table [x=t, y=x_m_1, col sep=comma] {data/TB_BWSHCCM.csv};
	\addlegendentry{$x_{m,1}$};
	\addplot [clr4, mark=none, dashed] table [x=t, y=x_m_2, col sep=comma] {data/TB_BWSHCCM.csv};
	\addlegendentry{$x_{m,2}$};
	\addplot [clr4, mark=none, solid] table [x=t, y=x_m, col sep=comma] {data/TB_BWSHCCM.csv};
	\addlegendentry{$x_m$};
	\addplot [clr4, mark=none, dash dot] table [x=t, y=x, col sep=comma] {data/TB_BWSHCCM.csv};
	\addlegendentry{$x$};			
	\end{axis}
	\end{tikzpicture}
}
\subcaption{BWSHCCM}\label{fig:TB_BWSHCCM_x}
\end{subfigure}
\hfill
\begin{subfigure}[t]{0.475\textwidth} 
\centering{
    \begin{tikzpicture}
	\definecolor{clr4}{RGB}{0,0,0}
	\begin{axis}[xlabel={$t \; (\text{s})$}, ylabel={$\text{displacement} \; (\text{m})$}, xticklabel style={/pgf/number format/precision=2, /pgf/number format/fixed}, legend style={at={(0.5,1.01)}, anchor=south, font=\small, legend columns=-1}, line width=1pt, height=1.25in, width=\linewidth]
	\addplot [clr4, mark=none, dotted] table [x=t, y=x_m_1, col sep=comma] {data/TB_BWMCM.csv};
	\addlegendentry{$x_{m,1}$};
	\addplot [clr4, mark=none, dashed] table [x=t, y=x_m_2, col sep=comma] {data/TB_BWMCM.csv};
	\addlegendentry{$x_{m,2}$}
	\addplot [clr4, mark=none, solid] table [x=t, y=x_m, col sep=comma] {data/TB_BWMCM.csv};
	\addlegendentry{$x_m$};	
	\addplot [clr4, mark=none, dash dot] table [x=t, y=x, col sep=comma] {data/TB_BWMCM.csv};
	\addlegendentry{$x$};		
	\end{axis}
	\end{tikzpicture}
}
\subcaption{BWMCM}\label{fig:TB_BWMCM_x}
\end{subfigure}
\caption{Evolution of the contact interface and the locations of the centers of mass during a collision of two balls under the influence of external forces}\label{fig:TB_x}
\end{figure*}

\begin{figure*}[t!]
\begin{subfigure}[t]{0.475\textwidth}
\centering{
    \begin{tikzpicture}
	\definecolor{clr4}{RGB}{0,0,0}
	\begin{axis}[xlabel={$t \; (\text{s})$}, ylabel={$\text{velocity} \; (\text{m} \; \text{s}^{-1})$}, xticklabel style={/pgf/number format/precision=2, /pgf/number format/fixed}, legend style={at={(0.5,1.01)}, anchor=south, font=\small, legend columns=-1}, line width=1pt, height=1.25in, width=\linewidth]
	\addplot [clr4, mark=none, dotted] table [x=t, y=v_m_1, col sep=comma] {data/TB_BWSHCCM.csv};
	\addlegendentry{$v_{m,1}$};
	\addplot [clr4, mark=none, dashed] table [x=t, y=v_m_2, col sep=comma] {data/TB_BWSHCCM.csv};
	\addlegendentry{$v_{m,2}$};
	\addplot [clr4, mark=none, solid] table [x=t, y=v_m, col sep=comma] {data/TB_BWSHCCM.csv};
	\addlegendentry{$v_m$};
	\addplot [clr4, mark=none, dash dot] table [x=t, y=v, col sep=comma] {data/TB_BWSHCCM.csv};
	\addlegendentry{$v$};			
	\end{axis}
	\end{tikzpicture}
}
\subcaption{BWSHCCM}\label{fig:TB_BWSHCCM_v}
\end{subfigure}
\hfill
\begin{subfigure}[t]{0.475\textwidth} 
\centering{
    \begin{tikzpicture}
	\definecolor{clr4}{RGB}{0,0,0}
	\begin{axis}[xlabel={$t \; (\text{s})$}, ylabel={$\text{velocity} \; (\text{m} \; \text{s}^{-1})$}, xticklabel style={/pgf/number format/precision=2, /pgf/number format/fixed}, legend style={at={(0.5,1.01)}, anchor=south, font=\small, legend columns=-1}, line width=1pt, height=1.25in, width=\linewidth]
	\addplot [clr4, mark=none, dotted] table [x=t, y=v_m_1, col sep=comma] {data/TB_BWMCM.csv};
	\addlegendentry{$v_{m,1}$};
	\addplot [clr4, mark=none, dashed] table [x=t, y=v_m_2, col sep=comma] {data/TB_BWMCM.csv};
	\addlegendentry{$v_{m,2}$}
	\addplot [clr4, mark=none, solid] table [x=t, y=v_m, col sep=comma] {data/TB_BWMCM.csv};
	\addlegendentry{$v_m$};	
	\addplot [clr4, mark=none, dash dot] table [x=t, y=v, col sep=comma] {data/TB_BWMCM.csv};
	\addlegendentry{$v$};		
	\end{axis}
	\end{tikzpicture}
}
\subcaption{BWMCM}\label{fig:TB_BWMCM_v}
\end{subfigure}
\caption{Evolution of the relative velocity at the contact interface and the velocities of the centers of mass during a collision of two balls under the influence of external forces}\label{fig:TB_v}
\end{figure*}

\begin{figure*}[t!]
\begin{subfigure}[t]{0.475\textwidth}
\centering{
    \begin{tikzpicture}
	\definecolor{clr4}{RGB}{0,0,0}
	\begin{axis}[xlabel={$t \; (\text{s})$}, ylabel={$\text{force} \; (\text{N})$}, xticklabel style={/pgf/number format/precision=2, /pgf/number format/fixed}, legend style={at={(0.5,1.01)}, anchor=south, font=\small, legend columns=-1}, line width=1pt, height=1.25in, width=\linewidth]
	\addplot [clr4, mark=none, dotted] table [x=t, y=u_1, col sep=comma] {data/TB_BWSHCCM.csv};
	\addlegendentry{$u_1$};
	\addplot [clr4, mark=none, dashed] table [x=t, y=u_2, col sep=comma] {data/TB_BWSHCCM.csv};
	\addlegendentry{$u_2$};
	\addplot [clr4, mark=none, solid] table [x=t, y=u, col sep=comma] {data/TB_BWSHCCM.csv};
	\addlegendentry{$u$};
	\addplot [clr4, mark=none, dash dot] table [x=t, y=F, col sep=comma] {data/TB_BWSHCCM.csv};
	\addlegendentry{$F$};			
	\end{axis}
	\end{tikzpicture}
}
\subcaption{BWSHCCM}\label{fig:TB_BWSHCCM_F}
\end{subfigure}
\hfill
\begin{subfigure}[t]{0.475\textwidth} 
\centering{
    \begin{tikzpicture}
	\definecolor{clr4}{RGB}{0,0,0}
	\begin{axis}[xlabel={$t \; (\text{s})$}, ylabel={$\text{force} \; (\text{N})$}, xticklabel style={/pgf/number format/precision=2, /pgf/number format/fixed}, legend style={at={(0.5,1.01)}, anchor=south, font=\small, legend columns=-1}, line width=1pt, height=1.25in, width=\linewidth]
	\addplot [clr4, mark=none, dotted] table [x=t, y=u_1, col sep=comma] {data/TB_BWMCM.csv};
	\addlegendentry{$u_1$};
	\addplot [clr4, mark=none, dashed] table [x=t, y=u_2, col sep=comma] {data/TB_BWMCM.csv};
	\addlegendentry{$u_2$}
	\addplot [clr4, mark=none, solid] table [x=t, y=u, col sep=comma] {data/TB_BWMCM.csv};
	\addlegendentry{$u$};	
	\addplot [clr4, mark=none, dash dot] table [x=t, y=F, col sep=comma] {data/TB_BWMCM.csv};
	\addlegendentry{$F$};		
	\end{axis}
	\end{tikzpicture}
}
\subcaption{BWMCM}\label{fig:TB_BWMCM_F}
\end{subfigure}
\caption{Evolution of the contact force and the external forces during a collision of two balls under the influence of external forces}\label{fig:TB_F}
\end{figure*}

The discussion that follows is with reference to Fig. \ref{fig:sd}. Assume that two homogeneous balls, indexed 1 and 2, constrained to move in a vertical pipe near the surface of the Earth without friction or rotation collide at the time $t_0 = 0 \in \mathbb{R}_{\geq 0}$ at the point $C = (0, 0, 0)$. The radii of the balls are, effectively, equal to the radius of the tube, $a \in \mathbb{R}_{>0}$. The mass of ball $i$ is given by $m_i \triangleq (4/3) \pi \rho_i a^3$ (e.g., see Ref. \cite{shurman_calculus_2016} and Ref. \cite{morro_mathematical_2023}) where $\rho_i \in \mathbb{R}_{>0}$ is the density of the material of ball $i$. Ball 1 is acted upon by the force of gravity given by
\begin{equation}
\mathbf{u}_1(t) \triangleq u_1(t) \hat{\mathbf{n}}_1 \triangleq -m_1 g \hat{\mathbf{n}}_1
\end{equation}
for all $t \in \mathbb{R}_{\geq 0}$. The point of application of the resultant of all external forces acting on ball 2 is its center of mass $G_2$. It is given by 
\begin{equation}
\mathbf{u}_2(t) \triangleq u_2(t) \hat{\mathbf{n}}_1 \triangleq \frac{2}{\pi} m_2 D \arctan{(\omega t)} \hat{\mathbf{n}}_1
\end{equation}

All relevant parameters for the models are listed in Table \ref{tab:AE}. Figures \ref{fig:TB_x}, \ref{fig:TB_v}, and \ref{fig:TB_F} show the evolution of relevant positions, velocities, and forces acting on the bodies during the collision, respectively, obtained based on the (rescaled) results of the numerical simulations of the NDBWSHCCM and the NDBWMCM. The methodology that was used for the simulation and data analysis is outlined in Appendix \ref{sec:SDA}.

\section{Model Parameter Identification}\label{sec:MPI}

\begin{table*}[t]
\caption{DSS and DSA: model parameter identification: the columns labeled DSS (0) and DSA (0) provide the data that were obtained based on the results of the previous study, the columns labeled DSS (1) and DSA (1) provide the data that were obtained based on the results of the present study; the values of $U_b$ and $T_b$ are irrelevant as it was assumed that $u = 0$; the dimensional parameters are stated in the SI base units \label{tab:DSSDSA_param}}
\centering{
\begin{tabular}{c c c c c c c c c}
\toprule
& \multicolumn{4}{c}{NDBWSHCCM} & \multicolumn{4}{c}{NDBWMCM}\\
\midrule
$P^{*}$ \& $J$      & DSS (0)   & DSS (1)    & DSA (0)    & DSA (1)  & DSS (0)  & DSS (1)                 & DSA (0)    & DSA (1) \\
\midrule
$B_b$               & $1.43$    & $2.38$     & $0.63$     & $1.04$   & $0.655$  & $0.966$                 & $0.44$     & $0.521$ \\
$\mathit{\Gamma}_b$ & $-1.42$   & $-2.38$    & $-0.611$   & $-1.02$  & $-0.64$  & $-0.966$                & $-0.418$   & $-0.521$ \\
$\kappa$            & $0.632$   & $0.677$    & $0.188$    & $0.362$  & $0.519$  & $0.531$                 & $0.113$    & $0.168$ \\
$\sigma_b$          & $0.00715$ & $0.0348$   & $0.00594$  & $0.017$  & $0.0118$ & $2.22 \times 10^{-16}$  & $0.00785$  & $0.0144$ \\
$n$                 & $1.31$    & $2.93$     & $1$        & $1.01$   & $1.94$   & $1.57$                  & $1.27$     & $1$ \\
$p$                 & $1.27$    & $2.62$     & $2.02$     & $1.67$   & $2.28$   & $1.75$                  & $3.14$     & $2.04$ \\
$U_b$               & n/a       & n/a        & n/a        & n/a      & n/a      & n/a                     & n/a        & n/a \\
$T_b$               & n/a       & n/a        & n/a        & n/a      & n/a      & n/a                     & n/a        & n/a \\
\midrule
$J_1 (\cdot)$ & $5.81 \times 10^{-5}$ & $3.51 \times 10^{-5}$ & $7.62 \times 10^{-6}$ & $7.23 \times 10^{-6}$ & $7.49 \times 10^{-5}$ & $6.05 \times 10^{-5}$ & $8.07 \times 10^{-6}$ & $8.02 \times 10^{-6}$ \\
\bottomrule
\end{tabular}
}
\end{table*}

\begin{figure*}[t]
\begin{subfigure}[t]{0.475\textwidth}
\centering{
    \begin{tikzpicture}
	\definecolor{clr1}{RGB}{150,150,150}
	\definecolor{clr2}{RGB}{0,0,0}
	\begin{axis}[xlabel={$v_0 \; (\text{m} \; \text{s}^{-1})$}, ylabel={$e$}, ymin = 0.5, ymax = 1, xticklabel style={/pgf/number format/precision=2, /pgf/number format/fixed}, legend style={font=\small}, line width=1pt]
	\addplot [clr1, mark=o, mark options={scale=1.5}, only marks] table [x=v_0, y=e, col sep=comma] {data/DSS.csv};
	\addlegendentry{DSS: experiment};
	\addplot [clr1, mark=x, mark options={scale=1.5}, only marks] table [x=v_0, y=e, col sep=comma] {data/DSS_NDBWSHCCM_0.csv};
	\addlegendentry{DSS: NDBWSHCCM (0)};	
	\addplot [clr1, mark=+, mark options={scale=1.5}, only marks] table [x=v_0, y=e, col sep=comma] {data/DSS_NDBWSHCCM_1.csv};
	\addlegendentry{DSS: NDBWSHCCM (1)};		
	\addplot [clr2, mark=o, mark options={scale=1.5}, only marks] table [x=v_0, y=e, col sep=comma] {data/DSA.csv};
	\addlegendentry{DSA: experiment};
	\addplot [clr2, mark=x, mark options={scale=1.5}, only marks] table [x=v_0, y=e, col sep=comma] {data/DSA_NDBWSHCCM_0.csv};
	\addlegendentry{DSA: NDBWSHCCM (0)};
	\addplot [clr2, mark=+, mark options={scale=1.5}, only marks] table [x=v_0, y=e, col sep=comma] {data/DSA_NDBWSHCCM_1.csv};
	\addlegendentry{DSA: NDBWSHCCM (1)};
	\end{axis}
	\end{tikzpicture}
}
\subcaption{CoR: NDBWSHCCM vs. experiment: NDBWSHCCM (0) represents the data from the previous study, NDBWSHCCM (1) represents the data from the present study}\label{fig:CoR_NDBWSHCCM}
\end{subfigure}
\hfill
\begin{subfigure}[t]{0.475\textwidth} 
\centering{
    \begin{tikzpicture}
	\definecolor{clr1}{RGB}{150,150,150}
	\definecolor{clr2}{RGB}{0,0,0}    
	\begin{axis}[xlabel={$v_0 \; (\text{m} \; \text{s}^{-1})$}, ylabel={$e$}, ymin = 0.5, ymax = 1, xticklabel style={/pgf/number format/precision=2, /pgf/number format/fixed}, legend style={font=\small}, line width=1pt]
	\addplot [clr1, mark=o, mark options={scale=1.5}, only marks] table [x=v_0, y=e, col sep=comma] {data/DSS.csv};
	\addlegendentry{DSS: experiment};
	\addplot [clr1, mark=x, mark options={scale=1.5}, only marks] table [x=v_0, y=e, col sep=comma] {data/DSS_NDBWMCM_0.csv};
	\addlegendentry{DSS: NDBWMCM (0)};
	\addplot [clr1, mark=+, mark options={scale=1.5}, only marks] table [x=v_0, y=e, col sep=comma] {data/DSS_NDBWMCM_1.csv};
	\addlegendentry{DSS: NDBWMCM (1)};	
	\addplot [clr2, mark=o, mark options={scale=1.5}, only marks] table [x=v_0, y=e, col sep=comma] {data/DSA.csv};
	\addlegendentry{DSA: experiment};
	\addplot [clr2, mark=x, mark options={scale=1.5}, only marks] table [x=v_0, y=e, col sep=comma] {data/DSA_NDBWMCM_0.csv};
	\addlegendentry{DSA: NDBWMCM (0)};
	\addplot [clr2, mark=+, mark options={scale=1.5}, only marks] table [x=v_0, y=e, col sep=comma] {data/DSA_NDBWMCM_1.csv};
	\addlegendentry{DSA: NDBWMCM (1)};
	\end{axis}
	\end{tikzpicture}
}
\subcaption{CoR: NDBWMCM vs. experiment: NDBWMCM (0) represents the data from the previous study, NDBWMCM (1) represents the data from the present study}\label{fig:CoR_NDBWMCM}
\end{subfigure}
\caption{Kharaz and Gorham (2000): CoR: models vs. experiment}\label{fig:CoR}
\end{figure*}

\begin{table*}[t]
\caption{Normal impact of a baseball on a flat surface: parameterization of the BWSHCCM and the BWMCM\label{tab:RCross}}
\centering{
\begin{tabular}{c c c}
\toprule
Parameter & BWSHCCM & BWMCM \\
\midrule
$m$      & $0.145 \; (\text{kg})$                                          & $0.145 \; (\text{kg})$ \\
$k$      & $117080063 \; (\text{kg} \; \text{m}^{1 - p} \; \text{s}^{-2})$ & $253000000 \; (\text{kg} \; \text{m}^{1 - p} \; \text{s}^{-2})$ \\
$c$      & $5854003  \; (\text{kg} \; \text{m}^{-p} \; \text{s}^{-1})$     & $2811 \; (\text{kg} \; \text{s}^{-1})$ \\
$n$      & $1.1 \; (\text{-})$                                             & $1.2 \; (\text{-})$ \\
$p$      & $1.7 \; (\text{-})$                                             & $1.8 \; (\text{-})$ \\
$\alpha$ & $0.1 \; (\text{-})$                                             & $0.15 \; (\text{-})$ \\
$\beta$  & $981.05 \; (\text{m}^{-n})$                                     & $1200 \; (\text{m}^{-n})$ \\
$\gamma$ & $-961.4 \; (\text{m}^{-n})$                                     & $-1200 \; (\text{m}^{-n})$ \\
$A$      & $0.925 \; (\text{-})$                                           & $1.01 \; (\text{-})$ \\
\bottomrule
\end{tabular}
}
\end{table*}

\begin{figure*}[t]
\begin{subfigure}[t]{0.475\textwidth}
\centering{
    \begin{tikzpicture}
	\definecolor{clr1}{RGB}{200,200,200}
	\definecolor{clr2}{RGB}{150,150,150}
	\definecolor{clr3}{RGB}{100,100,100}
	\definecolor{clr4}{RGB}{0,0,0}
	\begin{axis}[xlabel={$\abs{x} \; (\text{mm})$}, ylabel={$F \; (\text{N})$}, ymin = 0, xticklabel style={/pgf/number format/precision=2, /pgf/number format/fixed}, legend pos=north west, legend style={font=\small}, line width=1pt]
	\addplot [clr1, mark=none, dashed] table [x index=0, y index=1, col sep=comma] {data/RCross_2_15.csv};
	\addlegendentry{$v_0 = 2.15 \; (\text{m} \; \text{s}^{-1})$};
	\addplot [clr2, mark=none, dashed] table [x index=0, y index=1, col sep=comma] {data/RCross_3_03.csv};
	\addlegendentry{$v_0 = 3.03 \; (\text{m} \; \text{s}^{-1})$};
	\addplot [clr3, mark=none, dashed] table [x index=0, y index=1, col sep=comma] {data/RCross_4_18.csv};
	\addlegendentry{$v_0 = 4.18 \; (\text{m} \; \text{s}^{-1})$};
	\addplot [clr4, mark=none, dashed] table [x index=0, y index=1, col sep=comma] {data/RCross_5_02.csv};
	\addlegendentry{$v_0 = 5.02 \; (\text{m} \; \text{s}^{-1})$};
	\addplot [clr1, mark=none, mark options={scale=0.75}] table [x expr=1000*\thisrowno{0}, y index=1, col sep=comma] {data/RCross_BWSHCCM_sim_2_15.csv};
	\addlegendentry{$v_0 = 2.15 \; (\text{m} \; \text{s}^{-1})$};
	\addplot [clr2, mark=none, mark options={scale=0.75}] table [x expr=1000*\thisrowno{0}, y index=1, col sep=comma] {data/RCross_BWSHCCM_sim_3_03.csv};
	\addlegendentry{$v_0 = 3.03 \; (\text{m} \; \text{s}^{-1})$};
	\addplot [clr3, mark=none, mark options={scale=0.75}] table [x expr=1000*\thisrowno{0}, y index=1, col sep=comma] {data/RCross_BWSHCCM_sim_4_18.csv};
	\addlegendentry{$v_0 = 4.18 \; (\text{m} \; \text{s}^{-1})$};
	\addplot [clr4, mark=none, mark options={scale=0.75}] table [x expr=1000*\thisrowno{0}, y index=1, col sep=comma] {data/RCross_BWSHCCM_sim_5_02.csv};
	\addlegendentry{$v_0 = 5.02 \; (\text{m} \; \text{s}^{-1})$};
	\end{axis}
	\end{tikzpicture}
}
\subcaption{Normal impact of a baseball on a flat surface: experimentally obtained hysteresis loops (dashed lines) vs. hysteresis loops obtained from the numerical simulations of the BWSHCCM (solid lines)}\label{fig:RCross_NDBWSHCCM}
\end{subfigure}
\hfill
\begin{subfigure}[t]{0.475\textwidth} 
\centering{
    \begin{tikzpicture}
	\definecolor{clr1}{RGB}{200,200,200}
	\definecolor{clr2}{RGB}{150,150,150}
	\definecolor{clr3}{RGB}{100,100,100}
	\definecolor{clr4}{RGB}{0,0,0}
	\begin{axis}[xlabel={$\abs{x} \; (\text{mm})$}, ylabel={$F \; (\text{N})$}, ymin = 0, xticklabel style={/pgf/number format/precision=2, /pgf/number format/fixed}, legend pos=north west, legend style={font=\small}, line width=1pt]
	\addplot [clr1, mark=none, dashed] table [x index=0, y index=1, col sep=comma] {data/RCross_2_15.csv};
	\addlegendentry{$v_0 = 2.15 \; (\text{m} \; \text{s}^{-1})$};
	\addplot [clr2, mark=none, dashed] table [x index=0, y index=1, col sep=comma] {data/RCross_3_03.csv};
	\addlegendentry{$v_0 = 3.03 \; (\text{m} \; \text{s}^{-1})$};
	\addplot [clr3, mark=none, dashed] table [x index=0, y index=1, col sep=comma] {data/RCross_4_18.csv};
	\addlegendentry{$v_0 = 4.18 \; (\text{m} \; \text{s}^{-1})$};
	\addplot [clr4, mark=none, dashed] table [x index=0, y index=1, col sep=comma] {data/RCross_5_02.csv};
	\addlegendentry{$v_0 = 5.02 \; (\text{m} \; \text{s}^{-1})$};
	\addplot [clr1, mark=none, mark options={scale=0.75}] table [x expr=1000*\thisrowno{0}, y index=1, col sep=comma] {data/RCross_BWMCM_sim_2_15.csv};
	\addlegendentry{$v_0 = 2.15 \; (\text{m} \; \text{s}^{-1})$};
	\addplot [clr2, mark=none, mark options={scale=0.75}] table [x expr=1000*\thisrowno{0}, y index=1, col sep=comma] {data/RCross_BWMCM_sim_3_03.csv};
	\addlegendentry{$v_0 = 3.03 \; (\text{m} \; \text{s}^{-1})$};
	\addplot [clr3, mark=none, mark options={scale=0.75}] table [x expr=1000*\thisrowno{0}, y index=1, col sep=comma] {data/RCross_BWMCM_sim_4_18.csv};
	\addlegendentry{$v_0 = 4.18 \; (\text{m} \; \text{s}^{-1})$};
	\addplot [clr4, mark=none, mark options={scale=0.75}] table [x expr=1000*\thisrowno{0}, y index=1, col sep=comma] {data/RCross_BWMCM_sim_5_02.csv};
	\addlegendentry{$v_0 = 5.02 \; (\text{m} \; \text{s}^{-1})$};
	\end{axis}
	\end{tikzpicture}
}
\subcaption{Normal impact of a baseball on a flat surface: experimentally obtained hysteresis loops (dashed lines) vs. hysteresis loops obtained from the numerical simulations of the BWMCM (solid lines)}\label{fig:RCross_NDBWMCM}
\end{subfigure}
\caption{Normal impact of a baseball on a flat surface: models vs. experiment}\label{fig:RCross}
\end{figure*}

\subsection{Background}\label{sec:MPI:Background}

In this section, the two model parameter identification studies that were presented in Ref. \cite{milehins_boucwen_2025} are updated, and an additional model parameter identification study that is based on an experiment that showcases the impact of the effect of external forces on the behavior of bodies during the collision process is presented. 

The experimental data are often provided in the form of a finite sequence of measured absolute values of the initial relative velocities $\tilde{v}_0 \in \mathbb{R}_{>0}^M$ with $M \in \mathbb{Z}_{\geq 1}$ and a finite sequence of corresponding finite sequences of measured aggregate quantities (such as CoR $\tilde{e}$ or the duration of the collision $\tilde{t}_d$) $\tilde{\Theta} \in (\mathbb{R}^M)^N$ with $N \in \mathbb{Z}_{\geq 1}$.\footnote{It should be remarked that, often, only the averaged quantities obtained over multiple realizations are reported upon.} The experimental data may also be provided in the form of a finite sequence of measured absolute values of the initial relative velocities $\tilde{v}_0 \in \mathbb{R}_{>0}^M$ with $M \in \mathbb{Z}_{\geq 1}$ and a finite sequence of corresponding hysteresis loops: $M$ sequences indexed by $j \in \{1, \ldots, M \}$ that contain the contact force data $\tilde{F}_j \in \mathbb{R}^{K_j}$ vs. displacement data $\tilde{x}_j \in \mathbb{R}^{K_j}$ in the chronological order and with each $K_j \in \mathbb{Z}_{\geq 1}$. It will be assumed that the external force $u_j \in \mathcal{U}$ is known exactly for every $j \in \{1, \ldots, M\}$.

It is more convenient to perform the model parameter identification using the nondimensionalized collision models (i.e., the NDBWSHCCM and the NDBWMCM rather than the BWSHCCM and the BWMCM) as they have fewer parameters. Usually, the base parameters will be identified. The physical parameters associated with the BWSHCCM and the BWMCM can be recovered using the relationships presented in Table \ref{tab:ND}, Eq. \eqref{eq:EBWSHCCM_U_nd} and Eq. \eqref{eq:EBWMCM_U_nd}. However, the physical parameters may not be unique for a given vector of base parameters.

It will be assumed that every aggregate quantity of interest, indexed by $j \in \{1, \ldots, N\}$, can be expressed as a function $\Theta_j :  \mathbb{P}^{*} \times \mathcal{U} \times \mathbb{R}_{>0} \longrightarrow \mathbb{R}$. Then, the quality of a base parameterization $P^{*} \in \mathbb{P}^{*}$ of the NDBWSHCCM or the NDBWMCM may be assessed by the cost functions $J_j : \mathbb{R}_{>0}^M \times \mathbb{R}^M \times \mathcal{U}^M \times \mathbb{P}^{*} \longrightarrow \mathbb{R}_{\geq 0}$, one for each $j \in \{1, \ldots, N\}$, given by
\begin{equation}\label{eq:J}
J_j (\tilde{v}_0, \tilde{\Theta}_j, u, P^{*}) \triangleq \frac{1}{M} \sum_{l = 1}^{l = M} (\tilde{\Theta}_{j, l} - \Theta_j (P^{*}, u_l, \tilde{v}_{0, l}))^2
\end{equation}
which provide the mean squared modeling errors. Then, the model parameter identification problem can be stated as a global multi-objective nonlinear constrained optimization problem \cite{edgeworth_mathematical_1881, pareto_manual_2014}
\begin{equation}
{\arg\min}_{P^{*}} \; J_j (\tilde{v}_0, \tilde{\Theta}_j, u, P^{*}), \; j \in \{ 1, \ldots, N \}
\end{equation}
subject to
\begin{equation}\label{eq:op}
\begin{matrix}
0 \leq B_b \leq B_b^{u} \\
- B_b - \mathit{\Gamma}_b \leq 0 \\
- B_b + \mathit{\Gamma}_b \leq 0 \\
\kappa^{l} \leq \kappa \leq \kappa^{u}\\
\sigma_b^{l} \leq \sigma_b \leq \sigma_b^{u}\\
1 \leq p \leq p^{u}\\
1 \leq n \leq n^{u}\\
U_b^{l} \leq U_b \leq U_b^{u}\\
T_b^{l} \leq T_b \leq T_b^{u}
\end{matrix}
\end{equation}
where $P^{*} = (B_b, \mathit{\Gamma}_b, \kappa, \sigma_b, n, p, U_b, T_b)$, and where $\kappa^{l} \in (0, 1)$, $\sigma_b^{l} \in \mathbb{R}_{>0}$,\footnote{$\sigma_b^{l}$ may take the value of $0$ in the context of the identification of the parameters of the NDBWSHCCM, but this distinction has little practical significance.} $U_b^{l} \in \mathbb{R}_{> 0}$, $T_b^{l} \in \mathbb{R}_{> 0}$ are the lower bounds of the parameters, and $B_b^{u} \in \mathbb{R}_{\geq 0}$, $\kappa^{u} \in [\kappa^{l}, 1)$, $\sigma_b^{u} \in \mathbb{R}_{\geq \sigma_b^{l}}$, $n^{u} \in \mathbb{R}_{\geq 1}$, $p^{u} \in \mathbb{R}_{\geq 1}$, $U_b^{u} \in \mathbb{R}_{\geq U_b^{l}}$, $T_b^{u} \in \mathbb{R}_{\geq T_b^{l}}$ are the upper bounds of the parameters. 

Only two aggregate quantities will be considered in this study: CoR $\Theta_1$ given by 
\begin{equation}\label{eq:Theta_1}
\Theta_1(P^{*}, u, v_0) \triangleq e(\mathcal{P}(P^{*}, u, v_0)))
\end{equation}
and the duration of the collision $\Theta_2$ given by 
\begin{equation}\label{eq:Theta_2}
\Theta_2(P^{*}, u, v_0) \triangleq T_d (\mathcal{P}(P^{*}, u, v_0))) T_b v_0^{-(p - 1)/(p + 1)}
\end{equation}
with $p = P^{*}_6$ and $T_b = P^{*}_8$. 

The identification of the model parameters based on the hysteresis data was performed using a less formal procedure, and its detailed description will be omitted for brevity.

It should be remarked that due to the nature of the methodology that was chosen for the identification of the models, any apparent discrepancies in the quality of the parameterizations obtained using different models are not indicative of the capabilities of the models at large. The primary goal of the identification studies was to showcase that both models can provide an adequate description of the collision phenomena. 

The methodology that was used for the numerical simulation and data analysis is outlined in Appendix \ref{sec:SDA}.

\begin{figure*}[t]
\begin{subfigure}[t]{0.5\textwidth}
\centering{
    \begin{tikzpicture}
	\definecolor{clrk}{RGB}{0,0,0}
	\begin{axis}[xlabel={$v_0 \; (\text{m} \; \text{s}^{-1})$}, ylabel={$e$}, ymin = 0, ymax = 1, xticklabel style={/pgf/number format/precision=2, /pgf/number format/fixed}, legend style={font=\small}, legend pos=south east, line width=1pt, height=1.5in, width=\linewidth]
	\addplot [clrk, mark=o, mark options={scale=1.5}, only marks] table [x=v_0, y=e, col sep=comma] {data/VRBR.csv};
	\addlegendentry{experiment};
	\addplot [clrk, mark=x, mark options={scale=1.5}, only marks] table [x=v_0, y=e, col sep=comma] {data/VRBR_NDBWSHCCM_0.csv};
	\addlegendentry{NDBWSHCCM};	
	\end{axis}
	\end{tikzpicture}
}
\subcaption{CoR: NDBWSHCCM vs. experiment}\label{fig:VRBR_CoR_NDBWSHCCM}
\end{subfigure}
\begin{subfigure}[t]{0.5\textwidth} 
\centering{
    \begin{tikzpicture}
	\definecolor{clrk}{RGB}{0,0,0}    
	\begin{axis}[xlabel={$v_0 \; (\text{m} \; \text{s}^{-1})$}, ylabel={$e$}, ymin = 0, ymax = 1, xticklabel style={/pgf/number format/precision=2, /pgf/number format/fixed}, legend style={font=\small}, legend pos=south east, line width=1pt, height=1.5in, width=\linewidth]
	\addplot [clrk, mark=o, mark options={scale=1.5}, only marks] table [x=v_0, y=e, col sep=comma] {data/VRBR.csv};
	\addlegendentry{experiment};
	\addplot [clrk, mark=x, mark options={scale=1.5}, only marks] table [x=v_0, y=e, col sep=comma] {data/VRBR_NDBWMCM_0.csv};
	\addlegendentry{NDBWMCM};
	\end{axis}
	\end{tikzpicture}
}
\subcaption{CoR: NDBWMCM vs. experiment}\label{fig:VRBR_CoR_NDBWMCM}
\end{subfigure}
\caption{Villegas et al (2021): CoR: models vs. experiment}\label{fig:VRBR_CoR}
\end{figure*}

\begin{figure*}[t]
\begin{subfigure}[t]{0.5\textwidth}
\centering{
    \begin{tikzpicture}
	\definecolor{clrk}{RGB}{0,0,0}
	\begin{axis}[xlabel={$v_0 \; (\text{m} \; \text{s}^{-1})$}, ylabel={$t_d \; (\text{s})$}, ymin = 5/100, ymax = 1, ymode=log, xticklabel style={/pgf/number format/precision=2, /pgf/number format/fixed}, legend style={font=\small}, legend pos=north east, line width=1pt, height=1.5in, width=\linewidth]
	\addplot [clrk, mark=o, mark options={scale=1.5}, only marks] table [x=v_0, y=t_d, col sep=comma] {data/VRBR.csv};
	\addlegendentry{experiment};
	\addplot [clrk, mark=x, mark options={scale=1.5}, only marks] table [x=v_0, y=t_d, col sep=comma] {data/VRBR_NDBWSHCCM_0.csv};
	\addlegendentry{NDBWSHCCM};	
	\end{axis}
	\end{tikzpicture}
}
\subcaption{$t_d$: NDBWSHCCM vs. experiment}\label{fig:VRBR_t_d_NDBWSHCCM}
\end{subfigure}
\begin{subfigure}[t]{0.5\textwidth} 
\centering{
    \begin{tikzpicture}
	\definecolor{clrk}{RGB}{0,0,0}    
	\begin{axis}[xlabel={$v_0 \; (\text{m} \; \text{s}^{-1})$}, ylabel={$t_d \; (\text{s})$}, ymin = 5/100, ymax = 1, ymode=log, xticklabel style={/pgf/number format/precision=2, /pgf/number format/fixed}, legend style={font=\small}, legend pos=north east, line width=1pt, height=1.5in, width=\linewidth]
	\addplot [clrk, mark=o, mark options={scale=1.5}, only marks] table [x=v_0, y=t_d, col sep=comma] {data/VRBR.csv};
	\addlegendentry{experiment};
	\addplot [clrk, mark=x, mark options={scale=1.5}, only marks] table [x=v_0, y=t_d, col sep=comma] {data/VRBR_NDBWMCM_0.csv};
	\addlegendentry{NDBWMCM};
	\end{axis}
	\end{tikzpicture}
}
\subcaption{$t_d$: NDBWMCM vs. experiment}\label{fig:VRBR_t_d_NDBWMCM}
\end{subfigure}
\caption{Villegas et al (2021): $t_d$: models vs. experiment}\label{fig:VRBR_t_d}
\end{figure*}

\begin{table}[t]
\caption{Villegas et al (2021): model parameter identification (the dimensional parameters are stated in the SI base units) \label{tab:VRBR_param}}
\centering{
\begin{tabular}{c c c}
\toprule
$P^{*}$ \& $J$ & \multicolumn{1}{c}{NDBWSHCCM} & \multicolumn{1}{c}{NDBWMCM}\\
\midrule
$B_b$                & $17.4$     & $14.9$ \\
$\mathit{\Gamma}_b$  & $13.1$     & $14.7$ \\
$\kappa$             & $0.324$    & $0.454$ \\
$\sigma_b$           & $0.0794$   & $0.099$ \\
$n$                  & $1$        & $1.09$ \\
$p$                  & $1$        & $1$ \\
$U_b$                & $0.0503$   & $0.0595$ \\
$T_b$                & $0.0255$   & $0.0301$ \\
\midrule
$J_1 (\cdot)$ & $9.02 \times 10^{-5}$ & $3.58 \times 10^{-4}$ \\
$J_2 (\cdot)$ & $1.02 \times 10^{-4}$ & $1.31 \times 10^{-4}$ \\
\bottomrule
\end{tabular}
}
\end{table}

\subsection{Kharaz and Gorham (2000)}\label{sec:MPI:KG}

The first parameter identification study that was presented in Ref. \cite{milehins_boucwen_2025} employed the experimental datasets provided in Fig. 1 in Ref. \cite{kharaz_study_2000}:
\begin{compactitem}
\item ``dataset steel'' (DSS): CoR vs. initial relative velocity for the normal impact of a $5 \; \text{mm}$ diameter aluminum oxide sphere on a thick EN9 steel plate.
\item ``dataset aluminum'' (DSA): CoR vs. initial relative velocity for the normal impact of a $5 \; \text{mm}$ diameter aluminum oxide sphere on a thick aluminum alloy plate.
\end{compactitem}
The data were extracted using the image processing software WebPlotDigitizer \cite{rohatgi_webplotdigitizer_nodate}. In all experiments, a plate was fixed to the ground, and the spheres were dropped from a fixed height, gaining velocity under the influence of the force of gravity on Earth. 

The influence of external forces on the value of the coefficient of restitution was deemed insignificant (see Refs. \cite{quinn_finite_2004, sorace_high_2009, carvalho_exact_2019}). Thus, external forces will be ignored in the model parameter identification study, similarly to how it was done in Ref. \cite{milehins_boucwen_2025}.

The model parameter identification study described in Ref. \cite{milehins_boucwen_2025} was repeated using an implementation of the algorithm COBYQA \cite{ragonneau_model-based_2022, ragonneau_optimal_2024, ragonneau_pdfo_2024} available via the interface of the SciPy function \texttt{optimize.minimize} (in this case, the multi-objective optimization problem described in Sec.  \ref{sec:MPI:Background} reduces to a scalar optimization problem). The approximations of the values of the identified parameters and the associated values of the cost function are shown in Table \ref{tab:DSSDSA_param}. Figure \ref{fig:CoR_NDBWSHCCM} shows the plots of CoR vs. the initial relative velocity obtained experimentally and from the results of the numerical simulations of the NDBWSHCCM. Figure \ref{fig:CoR_NDBWMCM} shows the plots of CoR vs. the initial relative velocity obtained experimentally and from the results of the numerical simulations of the NDBWMCM. As can be seen from the values of the cost function, it was possible to improve the results that were obtained in the previous study, albeit the improvements were marginal. The significant differences between the values of the parameters that were obtained in this study in comparison to the values obtained in the previous study support the claim that the CoR data alone are not sufficient to infer a unique vector of model parameters.

\subsection{Cross (2011)}\label{sec:MPI:Cross}

This subsection presents an update to the model parameter identification study based on the experimentally obtained hysteresis data that was performed in Ref. \cite{milehins_boucwen_2025}. The experimental hysteresis data were provided by Professor Rodney Cross and appeared in Fig. 9.5 in Ref. \cite{cross_physics_2011}. 

Figure \ref{fig:RCross} shows the plots of the experimentally obtained hysteresis loops observed during normal impact of a baseball on a flat surface across a range of initial relative velocities, and the hysteresis loops obtained based on the results of the numerical simulations of the BWSHCCM and the BWMCM with the parameters shown in Table \ref{tab:RCross}. For the purposes of the identification of the parameters of the BWSHCCM and the BWMCM, it was assumed that the only known model parameter was the mass of the ball: its value ($0.145 \; \text{kg}$) was reported in Ref. \cite{cross_physics_2011}. Furthermore, due to the nature of the experiment, it was deemed appropriate to ignore external forces ($u = 0 \; \text{N}$). 

The plots demonstrate a good agreement between the experimentally obtained hysteresis loops and the hysteresis loops obtained from the simulations of the BWSHCCM and the BWMCM. As mentioned in Sec. \ref{sec:contributions}, the previous study \cite{milehins_boucwen_2025} was restricted to the identification of the parameters of the BWSHCCM. The present study shows that the BWMCM can also adequately represent the nature of the physical phenomenon that was described in Ref. \cite{cross_physics_2011}.

\subsection{Villegas et al (2021)}\label{sec:MPI:Quinn}

The final parameter identification study will employ the experimental dataset provided in Fig. 4 in Ref. \cite{villegas_impact_2021}. The figure visualizes the CoR vs. initial relative velocity data and the $t_d$ vs. initial relative velocity data obtained from the measurements of repeated normal impacts of a spring-loaded cart rolling on an inclined surface under the influence of the force of gravity. The data were extracted manually with the assistance of the image processing software WebPlotDigitizer \cite{rohatgi_webplotdigitizer_nodate}. 

The collision model of the experimental setup established by the authors of Ref. \cite{villegas_impact_2021} neglects the forces associated with friction at the contact points of the wheels of the cart with the ground. The same methodology is applied in this article. In this case, both the BWSHCCM and the BWMCM can adequately represent the physical phenomenon, provided that that the external force is given by $u \triangleq - m g \sin \theta$, where $m = 0.506 \; \text{kg}$ is the mass of the cart, $g = 9.78 \; \text{m} \; \text{s}^{-1}$ is the gravitational acceleration (the value of $g$ was reported in Ref. \cite{villegas_impact_2021}), and $\theta=\pi/36 \; \text{rad}$ is the angle of the inclined surface (upon which the cart was rolling) with respect to the ground. The authors of Ref. \cite{villegas_impact_2021} also report the value of the stiffness of the spring attached to the cart: $k = 255 \; \text{kg} \; \text{s}^{-2}$, which was assumed to be linear ($p = 1$). In what follows, for the purposes of model parameter identification, it was assumed that $m$ and $k$ were known while $p$ was allowed to vary. These assumptions lead to the following additional constraints:
\begin{equation}\label{eq:VRBR:c1}
T_b - m U_b = 0
\end{equation}
\begin{equation}\label{eq:VRBR:c2}
\frac{m}{k} \frac{\kappa}{T_b^{p + 1}} - 1 \leq 0
\end{equation}
The constraints follow from the relationships between the dimensional parameters associated with the BWSHCCM and the BWMCM and the nondimensional parameters associated with the NDBWSCHCM and the NDBWMCM, respectively (see Table \ref{tab:ND}, Eq. \eqref{eq:EBWSHCCM_U_nd} and Eq. \eqref{eq:EBWMCM_U_nd}).\footnote{The constraint in Eq. \eqref{eq:VRBR:c1} leads to the reduction of the number of the optimization variables.}

Since both the values of CoR $\tilde{e}$ (or $\tilde{\Theta}_1$) and the duration of the collision $\tilde{t}_d$ (or $\tilde{\Theta}_2$) are reported upon in Ref. \cite{villegas_impact_2021}, the parameter identification problem leads to the bicriteria optimization problem 
\begin{equation}
{\arg\min}_{P^{*}} \; J_j (\tilde{v}_0, \tilde{\Theta}_j, u, P^{*}), \; j \in \{ 1, 2 \}
\end{equation}
subject to the (parameterized) constraints given by Eq. \eqref{eq:op}, Eq. \eqref{eq:VRBR:c1}, and Eq. \eqref{eq:VRBR:c2}, with $\Theta_1$ given by Eq. \eqref{eq:Theta_1} and $\Theta_2$ given by Eq. \eqref{eq:Theta_2}.

It was found empirically (using an implementation of the algorithm COBYQA \cite{ragonneau_model-based_2022, ragonneau_optimal_2024, ragonneau_pdfo_2024} available via the interface of the SciPy function \texttt{optimize.minimize}) that suitably chosen linear scalarizations of the optimization problem can lead to solutions that closely match the experimental data ($J_1, J_2 \sim O(10^{-4})$). The experimental data and the data obtained based on the results of the simulation of the collision models parameterized using a representative vector of identified parameters are shown in Fig. \ref{fig:VRBR_CoR} and Fig. \ref{fig:VRBR_t_d}.

\section{Conclusions and Future Work}\label{sec:conclusions}

The article provided extensions of two mathematical models of binary direct collinear collisions of convex viscoplastic bodies (BWSHCCM and BWMCM) that take into account the effects of external forces. Furthermore, the analysis of the BWMCM was extended to consider certain corner cases that were not considered in the prior study conducted by the authors \cite{milehins_boucwen_2025}.

From the perspective of future work, first of all, it will be useful to establish an unambiguous methodology for the selection of the model parameters based on the properties of the materials and the geometry of the colliding bodies (given the limited amount of experimental contact force data available in the literature, this research will inevitably involve a large-scale experimental study). It will also be useful to extend the modeling framework to other function spaces for the input signals; it will also be beneficial to extend the binary collision model presented in this article to binary collisions of multibody systems or simultaneous collisions of multiple bodies (e.g., see Refs. \cite{brogliato_nonsmooth_2016, stronge_impact_2018}); it may also be useful to consider collision laws developed based on the models of hysteresis other than the Bouc-Wen model (e.g., see Refs. \cite{reid_free_1956, chua_mathematical_1971, chua_generalized_1972, krasnoselskii_systems_1989, mayergoyz_mathematical_1991, macki_mathematical_1993, sadd_contact_1993, visintin_differential_1994, brokate_hysteresis_1996, lacy_hysteretic_2000, oh_modeling_2003, oh_analysis_2003, muravskii_frequency_2004, biswas_reduced-order_2014, biswas_two-state_2015, mielke_rate-independent_2015, biswas_hysteretic_2016, bhattacharjee_interplay_2017, ikhouane_survey_2018, ikhouane_erratum_2018, vaiana_class_2018, vaiana_generalized_2021, vaiana_analytical_2023, vaiana_classification_2023}).

\section*{Acknowledgment} 

The authors would like to acknowledge their families, colleagues, and friends. Special thanks also go to two anonymous reviewers: multiple significant improvements were introduced to the original draft of the article based on their feedback. Special thanks also go to Professor Rodney Cross for providing experimental data from Ref. \cite{cross_physics_2011}. Special thanks also go to the members of staff of Auburn University Libraries for their assistance in finding rare and out-of-print research articles and research monographs. The authors would also like to acknowledge the professional online communities, instructional websites, and various online service providers, especially \url{https://www.adobe.com/acrobat/online/pdf-to-word.html}, \url{https://archive.org/}, \url{https://automeris.io}, \url{https://capitalizemytitle.com}, \url{https://www.matweb.com}, \url{https://www.overleaf.com}, \url{https://pgfplots.net}, \url{https://proofwiki.org/}, \url{https://www.reddit.com}, \url{https://scholar.google.com}, \url{https://stackexchange.com}, \url{https://stringtranslate.com}, \url{https://www.wikipedia.org}. We also note that the results of some of the calculations that are presented in this article were performed with the assistance of the software Wolfram Mathematica \cite{wolfram_research_inc_mathematica_2023}. Other software that was used to produce this article included Adobe Acrobat Reader, Adobe Digital Editions, DiffMerge, Git, GitLab, Google Chrome, Google Gemini (Google Gemini was used as an assistant; no significant parts of the article or the associated code were written by AI), Grammarly (the use of Grammarly was restricted to the identification and correction of spelling, grammar, and punctuation errors), Jupyter Notebook, LibreOffice, macOS Monterey, Mamba, Microsoft Outlook, Preview, Safari, TeX Live/MacTeX, Texmaker, and Zotero.

\section*{Funding Data}

The present work did not receive any specific funding. However, the researchers receive financial support from Auburn University for their overall research activity.

\appendix 

\section{Notation and Conventions}\label{sec:NC}

The notation is adopted from Ref. \cite{milehins_boucwen_2025}, and will not be restated. Essentially all of the definitions and results that are employed in this article are standard in the fields of set theory, general topology, analysis, ordinary differential equations, and nonlinear systems/control. They can be found in a number of textbooks and monographs on these subjects (e.g., see Refs. \cite{takeuti_introduction_1982, kelley_general_2017, khalil_nonlinear_2002, bloch_real_2010, shurman_calculus_2016, ziemer_modern_2017, yoshizawa_stability_1966, yoshizawa_stability_1975, sontag_mathematical_1998, haddad_nonlinear_2011}). Nonetheless, the article employs several concepts that have not appeared in Ref. \cite{milehins_boucwen_2025}. The majority of these concepts are related to the description of dynamics of time-variant systems.

Unless stated otherwise, the time variable for all dynamical systems will be denoted as $t \in \mathbb{R}_{\geq 0}$ (dimensional) or $T \in \mathbb{R}_{\geq 0}$ (nondimensionalized) and $\dot{x}$ will be used to denote the derivative of a differentiable function $x : I \longrightarrow \mathbb{R}^n$ with $I \subseteq \mathbb{R}$ and $n \in \mathbb{Z}_{\geq 1}$. The state variables, inputs, and outputs of a dynamical system may be specified by indicating only their codomains. For example, $q \in \mathbb{R}$ may be used to state that $q$ ranges over the set of real numbers.

\begin{definition}
Consider the following system of ordinary differential equations
\begin{equation}\label{eq:sys}
\dot{x} = f(t, x)
\end{equation}
where $f : \mathbb{R}_{\geq 0} \times \mathbb{R}^n \longrightarrow \mathbb{R}^n$ with $n \in \mathbb{Z}_{\geq 1}$ is the state function that is continuous in the first argument ($t$) and locally Lipschitz continuous in the second argument ($x$). Equation \eqref{eq:sys} augmented with an initial condition $x(t_0) = x_0 \in \mathbb{R}^n$ where $t_0 \in \mathbb{R}_{\geq 0 }$ shall be referred to as an Initial Value Problem (IVP) associated with the system given by Eq. \eqref{eq:sys}. A differentiable function $x : J \longrightarrow \mathbb{R}^n$ with $J \triangleq [t_0, t_0 + T)$ and $T \in \mathbb{R}_{>0} \cup \{ + \infty \}$ is a solution of the IVP associated with the system given by Eq. \eqref{eq:sys} with the initial condition $x_0 \in \mathbb{R}^n$ if $x(t_0) = x_0$ and $\dot{x}(t) = f(t, x(t))$ for all $t \in J$. The system given by Eq. \eqref{eq:sys} may also have an output, which is expressed by the relation $y = g(x)$ or $x \mapsto g(x) \triangleq y$, where $g : \mathbb{R}^n \longrightarrow \mathbb{R}^m$ with $m \in \mathbb{Z}_{\geq 1}$ is a continuous function.\footnote{The output may also depend explicitly on time, but such systems have limited significance in the context of this study.} 
\end{definition}

The following definition was adopted from Ref. \cite{yoshizawa_stability_1975}:\footnote{$\lVert \cdot \rVert$ denotes an arbitrary norm on $\mathbb{R}^n$.}
\begin{definition}
The solutions of the system given by Eq. \eqref{eq:sys} are said to be uniformly bounded if and only if for all $\alpha \in \mathbb{R}_{>0}$, there exists $\beta \in \mathbb{R}_{>0}$ such that $\lVert x(t) \rVert < \beta$ for all $t \in [t_0, t_0 + T)$ for every solution $x : [t_0, t_0 + T) \longrightarrow \mathbb{R}^n$ with $t_0 \in \mathbb{R}_{\geq 0}$ and $T \in \mathbb{R}_{>0} \cup \{ +\infty \}$ starting from the initial condition $x(t_0) = x_0 \in \mathbb{R}^n$ such that $\lVert x_0 \rVert \leq \alpha$.
\end{definition}
The definition can be augmented to consider the outputs of the system:
\begin{definition}
The outputs of the system given by Eq. \eqref{eq:sys} are said to be uniformly bounded if and only if for all $\alpha \in \mathbb{R}_{>0}$, there exists $\gamma \in \mathbb{R}_{>0}$ such that $\lVert g(x(t)) \rVert < \gamma$ for all $t \in [t_0, t_0 + T)$ for every solution $x : [t_0, t_0 + T) \longrightarrow \mathbb{R}^n$ with $t_0 \in \mathbb{R}_{\geq 0}$ and $T \in \mathbb{R}_{>0} \cup \{ +\infty \}$ starting from the initial condition $x(t_0) = x_0 \in \mathbb{R}^n$ such that $\lVert x_0 \rVert \leq \alpha$.
\end{definition}
\begin{prop}\label{thm:ub_solution_imp_ub_output}
If the solutions of the system given by Eq. \eqref{eq:sys} are uniformly bounded, then the outputs of the system given by Eq. \eqref{eq:sys} are uniformly bounded. 
\end{prop}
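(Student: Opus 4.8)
The plan is to exploit the continuity of the output map $g$ together with the compactness of closed balls in $\mathbb{R}^n$: uniform boundedness of the solutions confines every trajectory issuing from a bounded set of initial conditions to a single fixed ball, and a continuous function is automatically bounded on any compact set.

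First I would fix an arbitrary $\alpha \in \mathbb{R}_{>0}$. By the hypothesis that the solutions are uniformly bounded, there exists $\beta \in \mathbb{R}_{>0}$ such that every solution $x : [t_0, t_0 + T) \longrightarrow \mathbb{R}^n$ (with $t_0 \in \mathbb{R}_{\geq 0}$ and $T \in \mathbb{R}_{>0} \cup \{ +\infty \}$) starting from an initial condition with $\lVert x_0 \rVert \leq \alpha$ satisfies $\lVert x(t) \rVert < \beta$ for all $t$ in its interval of existence. In particular, every such trajectory is contained in the set $K \triangleq \{ \xi \in \mathbb{R}^n : \lVert \xi \rVert \leq \beta \}$, which is independent of the choice of $t_0$, $T$, and of the particular solution.

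Next I would invoke the extreme value theorem. The set $K$ is closed and bounded, hence compact, and the map $\xi \mapsto \lVert g(\xi) \rVert$ is continuous, being the composition of the continuous output function $g$ with a norm. Therefore this map attains a finite maximum $M \triangleq \max_{\xi \in K} \lVert g(\xi) \rVert$ on $K$. Setting $\gamma \triangleq M + 1 \in \mathbb{R}_{>0}$ then gives $\lVert g(x(t)) \rVert \leq M < \gamma$ for every admissible solution $x$ and every $t$ in its interval of existence. As $\alpha \in \mathbb{R}_{>0}$ was arbitrary, this is precisely the assertion that the outputs of the system given by Eq. \eqref{eq:sys} are uniformly bounded.

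The argument presents no genuine obstacle; it is a direct application of the extreme value theorem, and the only point warranting a moment of care is the bookkeeping with \emph{strict} versus non-strict inequalities. The definition of uniform boundedness of solutions furnishes the strict bound $\lVert x(t) \rVert < \beta$, which is absorbed painlessly into the \emph{closed} ball $K$ since $\{ \xi : \lVert \xi \rVert < \beta \} \subseteq K$, while the definition of uniform boundedness of outputs demands the strict bound $\lVert g(x(t)) \rVert < \gamma$; the latter is secured by padding the maximum $M$ by one. I would therefore expect the proof to be short and entirely routine.
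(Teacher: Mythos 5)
Your proof is correct and follows exactly the route of the paper's own (one-line) proof: the paper simply invokes the Extreme Value Theorem for the continuous output map $g$ on the compact ball furnished by uniform boundedness of the solutions, which is precisely the argument you spell out in detail. Your additional care with the strict versus non-strict inequalities is sound but routine.
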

\begin{proof}
Since $g$ is continuous, the proof follows from the Extreme Value Theorem (e.g., see Theorem 2.4.15 in Ref. \cite{shurman_calculus_2016}).
\end{proof}

\section{Analysis of the BWSHCCM}\label{sec:ABWSHCCM}

Here, the NDBWSHCCM is considered under the assumption that the initial conditions are arbitrary and the values of the parameters are restricted to $B \in \mathbb{R}_{\geq 0}$, $\mathit{\Gamma} \in [-B, B]$, $\kappa \in (0, 1)$, $\sigma \in \mathbb{R}_{\geq 0}$, $n, p \in \mathbb{R}_{\geq 1}$, and $U \in \mathcal{U}_1$. 

Define $\mathcal{W} : \mathbb{R}^3 \longrightarrow \mathbb{R}$ as
\[
\mathcal{W}(\mathbf{X}) \triangleq  \frac{\kappa}{p + 1} \abs{X}^{p + 1} + \frac{\kappa_c}{p + 1} \abs{Z}^{p + 1} + \frac{1}{2} V^2
\]
for all $\mathbf{X} \triangleq (X, Z, V) \in \mathbb{R}^3$.\footnote{The informal notation $\mathbf{X} \triangleq (A_1, \ldots, A_k)$ will be used to introduce a symbol $\mathbf{X}$ for a vector in $\mathbb{R}^k$ with $k \in \mathbb{Z}_{\geq 1}$ and an additional symbol for each of its components.} Define $E : \mathbb{R}_{\geq 0} \longrightarrow \mathbb{R}$ as
\[
E(T) \triangleq \int_0^{T} \abs{U(s)} ds
\]
Note that $E(+\infty) \in \mathbb{R}_{\geq 0}$. Define the Lyapunov function candidate $\mathcal{V} : \mathbb{R}_{\geq 0} \times \mathbb{R}^3 \longrightarrow \mathbb{R}$ as\footnote{This form of the Lyapunov function candidate was inspired by Example 10.1 in Ref. \cite{yoshizawa_stability_1975}.}
\[
\mathcal{V}(T, \mathbf{X}) \triangleq e^{-E(T)} \mathcal{W}(\mathbf{X})
\]
Define $\mathcal{W}' : \mathbb{R}^3 \longrightarrow \mathbb{R}$ as
\[
\mathcal{W}'(\mathbf{X}) \triangleq -\sigma \abs{X}^p V^2 - \kappa_c \abs{Z}^{p + n - 1} \left( B \abs{Z} \abs{V} + \mathit{\Gamma} Z V \right)
\]
for all $\mathbf{X} \in \mathbb{R}^3$. Referring to Ref. \cite{milehins_boucwen_2025}, note that 
\[
\dot{\mathcal{W}}(T, \mathbf{X}) = \mathcal{W}'(\mathbf{X}) + U(T) V
\]
for all $T \in \mathbb{R}_{\geq 0}$ and $\mathbf{X} \in \mathbb{R}^3$. Thus, 
\[
\dot{\mathcal{V}}(T, \mathbf{X}) = e^{- E(T)} \left( U(T) V - \abs{U(T)} \mathcal{W}(\mathbf{X}) \right) + e^{- E(T)} \mathcal{W}'(\mathbf{X})
\]
for all $T \in \mathbb{R}_{\geq 0}$ and $\mathbf{X} \in \mathbb{R}^3$. Define $K \in \mathbb{R}_{\geq 0}$ as
\[
K \triangleq \max \left( \frac{p + 1}{\kappa}, \frac{p + 1}{\kappa_c} \right)
\]

\begin{lem}\label{thm:EBWSHCCM_V_leq_0}
Under the restrictions on the values of the parameters stated above, $\dot{\mathcal{V}}(T, \mathbf{X}) \leq 0$ for all $T \in \mathbb{R}_{\geq 0}$ and $\mathbf{X} \in \mathbb{R}^3$ such that $K \leq \lVert \mathbf{X} \rVert_{\infty}$.\footnote{Note that $\lVert x \rVert_{\infty} \triangleq \max \left( \abs{x_1}, \ldots, \abs{x_n} \right)$ denotes the standard $\infty$-norm on $\mathbb{R}^n$.}
\end{lem}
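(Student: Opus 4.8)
The plan is to read off the sign of $\dot{\mathcal{V}}$ directly from the factorized expression already recorded, namely $\dot{\mathcal{V}}(T,\mathbf{X}) = e^{-E(T)}\bigl(U(T)V - \abs{U(T)}\mathcal{W}(\mathbf{X}) + \mathcal{W}'(\mathbf{X})\bigr)$. Since $e^{-E(T)} > 0$ for every $T$, it suffices to show that the bracketed quantity is nonpositive whenever $\lVert\mathbf{X}\rVert_\infty \geq K$. I would treat the autonomous dissipation term $\mathcal{W}'(\mathbf{X})$ and the input-driven terms separately, since the former is nonpositive unconditionally while the latter is where the norm hypothesis enters.

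First I would dispatch $\mathcal{W}'(\mathbf{X}) \leq 0$ for every $\mathbf{X}$. The summand $-\sigma\abs{X}^p V^2$ is nonpositive because $\sigma \geq 0$. For the hysteretic summand, the constraint $\mathit{\Gamma}\in[-B,B]$ gives $\mathit{\Gamma}ZV \geq -\abs{\mathit{\Gamma}}\abs{Z}\abs{V} \geq -B\abs{Z}\abs{V}$, so $B\abs{Z}\abs{V} + \mathit{\Gamma}ZV \geq 0$; multiplying by the nonpositive factor $-\kappa_c\abs{Z}^{p+n-1}$ (here $\kappa_c = 1-\kappa > 0$) leaves a nonpositive term. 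Hence $\mathcal{W}'(\mathbf{X}) \leq 0$ with no restriction on $\mathbf{X}$.

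The crux is the input-driven pair. I would bound $U(T)V \leq \abs{U(T)}\abs{V}$, so that $U(T)V - \abs{U(T)}\mathcal{W}(\mathbf{X}) \leq \abs{U(T)}\bigl(\abs{V} - \mathcal{W}(\mathbf{X})\bigr)$; as $\abs{U(T)} \geq 0$, it is enough to establish the coercivity estimate $\mathcal{W}(\mathbf{X}) \geq \abs{V}$ on the set $\lVert\mathbf{X}\rVert_\infty \geq K$. This is exactly where the value $K = \max\bigl(\frac{p+1}{\kappa}, \frac{p+1}{\kappa_c}\bigr)$ is exploited, via a case split on which coordinate realizes the infinity norm. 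Using the completed-square bound $\frac12 V^2 \geq \abs{V} - \frac12$, together with the elementary fact that $\abs{X} \geq \frac{p+1}{\kappa}$ and $p \geq 1$ force $\frac{\kappa}{p+1}\abs{X}^{p+1} \geq \abs{X} \geq K$ (and symmetrically for $Z$ with threshold $\frac{p+1}{\kappa_c}$), the cases $\abs{X}\geq K$ and $\abs{Z}\geq K$ each yield $\mathcal{W}(\mathbf{X}) - \abs{V} \geq K - \frac12 > 0$. In the remaining case $\abs{V}\geq K$, I would note that $K > 2$ (because $p+1\geq 2$ and $\kappa,\kappa_c < 1$), whence $\frac12 V^2 \geq \abs{V}$ outright, and nonnegativity of the other two summands closes the estimate.

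To conclude, combining $\abs{U(T)}\bigl(\abs{V}-\mathcal{W}(\mathbf{X})\bigr) \leq 0$ with $\mathcal{W}'(\mathbf{X}) \leq 0$ shows the bracket is nonpositive, and multiplication by $e^{-E(T)} > 0$ gives $\dot{\mathcal{V}}(T,\mathbf{X}) \leq 0$. The only genuinely delicate step is the coercivity estimate of the third paragraph: the threshold $K$ must be large enough that the positive part of $\mathcal{W}$ dominates the $-\abs{V}$ contribution regardless of which coordinate is large, and checking this uniformly across all three coordinate directions (including through the quadratic term via the completed square) is the part that actually uses the precise definition of $K$. Everything else is sign-tracking.
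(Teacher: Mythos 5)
Your proposal is correct and follows essentially the same route as the paper's proof: factor out $e^{-E(T)}>0$, use $\mathcal{W}'(\mathbf{X})\leq 0$, reduce the input terms to the coercivity estimate $\abs{V}\leq\mathcal{W}(\mathbf{X})$ on $\lVert\mathbf{X}\rVert_\infty\geq K$, and verify it by a case split on which coordinate is large. The only cosmetic difference is inside the $X$ and $Z$ cases, where you absorb $\abs{V}$ via the completed square $\tfrac12 V^2\geq\abs{V}-\tfrac12$ while the paper instead uses $\abs{V}\leq\abs{X}$ (respectively $\abs{V}\leq\abs{Z}$) from the definition of the $\infty$-norm; both are valid.
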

\begin{proof}

Note that $\mathcal{W}'(\mathbf{X}) \leq 0$ for all $\mathbf{X} \in \mathbb{R}^3$ (a proof can be found in Ref. \cite{milehins_boucwen_2025}). Also, $2 \leq K$ because $(p + 1)/\kappa \leq K$, $\kappa \in (0, 1)$, and $1 \leq p$.

Fix $T \in \mathbb{R}_{\geq 0}$ and $\mathbf{X} \in \mathbb{R}^3$ such that $K \leq \lVert \mathbf{X} \rVert_{\infty}$. Since $\mathcal{W}'(\mathbf{X}) \leq 0$, to show that $\dot{\mathcal{V}}(T, \mathbf{X}) \leq 0$, it suffices to show that $U(T) V \leq \abs{U(T)} \mathcal{W}(\mathbf{X})$. Thus, it suffices to show that $\abs{V} \leq \mathcal{W}(\mathbf{X})$. There are three cases to consider:
\begin{compactitem}
\item Case I: $\lVert \mathbf{X} \rVert_{\infty} = \abs{X}$. It then follows that $(p + 1)/\kappa \leq K \leq \abs{X}$ or $1 \leq \kappa/(p + 1) \abs{X}$, $1 \leq \abs{X}$ and $\abs{V} \leq \abs{X}$. Therefore, 
\[
\abs{V} \leq \abs{X} \leq \abs{X}^{p} \leq \frac{\kappa}{p + 1} \abs{X}^{p + 1} \leq \mathcal{W} (\mathbf{X})
\]
\item Case II: $\lVert \mathbf{X} \rVert_{\infty} = \abs{Z}$. The proof of $\abs{V} \leq \mathcal{W} (\mathbf{X})$ follows from an argument similar to the one used in Case I.
\item Case III: $\lVert \mathbf{X} \rVert_{\infty} = \abs{V}$. Then, $2 \leq K \leq \abs{V}$. Thus, 
\[
\abs{V} \leq \frac{1}{2} V^2 \leq \mathcal{W} (\mathbf{X})
\]
\end{compactitem}
Thus, $\dot{\mathcal{V}}(T, \mathbf{X}) \leq 0$. By generalization, this holds for all $T \in \mathbb{R}_{\geq 0}$ and $\mathbf{X} \in \mathbb{R}^3$ such that $K \leq \lVert \mathbf{X} \rVert_{\infty}$.
\end{proof}

\begin{prop}\label{thm:EBWSHCCM_EUB}
Under the restrictions on the values of the parameters stated above, there exists a unique solution of the NDBWSHCCM on any time interval $[0, T)$ with $T \in \mathbb{R}_{>0} \cup \{ +\infty \}$ for every initial condition $(X_0, Z_0, V_0) \in \mathbb{R}^3$. Furthermore, the solutions of the NDBWSHCCM are uniformly bounded.
\end{prop}
\begin{proof}
Taking into account that the state function of the NDBWSHCCM is locally Lipschitz continuous in $\mathbf{X}$ and continuous in $T$, the solutions of the NDBWSHCCM exist and are unique on a non-empty maximal interval of existence (e.g., see Theorem 54 in Ref. \cite{sontag_mathematical_1998} or Theorem 2.38 in Ref. \cite{haddad_nonlinear_2011}). Noting that $\mathcal{W}$ and $\mathcal{V}$ are continuously differentiable, $\mathcal{W}$ is radially unbounded and positive definite,
\[
e^{-E(+\infty)} \mathcal{W}(\mathbf{X}) \leq \mathcal{V} (T, \mathbf{X}) \leq \mathcal{W}(\mathbf{X})
\]
for all $T \in \mathbb{R}_{\geq 0}$ and $\mathbf{X} \in \mathbb{R}^3$, and $\dot{\mathcal{V}}(T, \mathbf{X}) \leq 0$ for all $T \in \mathbb{R}_{\geq 0}$ and $\mathbf{X} \in \mathbb{R}^3$ such that $K \leq \lVert \mathbf{X} \rVert_{\infty}$ (by Lemma \ref{thm:EBWSHCCM_V_leq_0}), the solutions of the NDBWSHCCM are uniformly bounded by Theorem 8.8 in Ref. \cite{yoshizawa_stability_1975} (see also \cite[Barbashin and Krasovskii (1952), as cited in Ref.][]{kellett_compendium_2014}, Ref. \cite{hahn_stability_1967} and Ref. \cite{kellett_compendium_2014} for a description of a relationship between $\mathcal{K}_{\infty}$-class functions and positive definite radially unbounded functions). Therefore, by the theorem on the extendability of the solutions (e.g., see Proposition C.3.6 in Ref. \cite{sontag_mathematical_1998} or Theorem 2.39 in Ref. \cite{haddad_nonlinear_2011}), each solution can be extended to a unique solution on $[0, +\infty)$.
\end{proof}

\section{Analysis of the BWMCM}\label{sec:ABWMCM}

In what follows, the NDBWMCM will be considered under the assumption that the initial conditions are arbitrary and the values of the parameters are restricted to $B \in \mathbb{R}_{\geq 0}$, $\mathit{\Gamma} \in [-B, B]$, $\kappa \in (0, 1)$, $\sigma \in \mathbb{R}_{> 0}$, $n, p \in \mathbb{R}_{\geq 1}$, and $U \in \mathcal{U}_1$. 

Define $\mathcal{W} : \mathbb{R}^4 \longrightarrow \mathbb{R}$ as
\[
\mathcal{W}(\mathbf{X}) \triangleq (R + \sigma V)^2 + \frac{\kappa}{p + 1} \abs{Y}^{p + 1} + \frac{\kappa_c}{p + 1} \abs{Z}^{p + 1} + \frac{1}{2} V^2
\]
for all $\mathbf{X} \triangleq (R, Y, Z, V) \in \mathbb{R}^4$. Define also $E : \mathbb{R}_{\geq 0} \longrightarrow \mathbb{R}$ as
\[
E(T) \triangleq \int_0^{T} \abs{U(s)} ds
\]
Note that $E(+\infty) \in \mathbb{R}_{\geq 0}$. Define the Lyapunov function candidate $\mathcal{V} : \mathbb{R}_{\geq 0} \times \mathbb{R}^4 \longrightarrow \mathbb{R}$ as\footnote{This form of the Lyapunov function candidate was inspired by Example 10.1 in Ref. \cite{yoshizawa_stability_1975}.}
\[
\mathcal{V}(T, \mathbf{X}) \triangleq e^{-E(T)} \mathcal{W}(\mathbf{X})
\]
Define $\mathcal{W}' : \mathbb{R}^4 \longrightarrow \mathbb{R}$ as
\[
\mathcal{W}'(\mathbf{X}) \triangleq - \frac{1}{\sigma} \dot{R}^2 - \kappa_c \abs{Z}^{p + n - 1} \left( B \abs{Z} \abs{V - \dot{R}} + \mathit{\Gamma} Z (V - \dot{R}) \right)
\]
and $\mathcal{W}'' : \mathbb{R}^4 \longrightarrow \mathbb{R}$ as
\[
\mathcal{W}''(\mathbf{X}) \triangleq (2 \sigma^2 + 1) V + 2 \sigma R
\]
for all $\mathbf{X} \in \mathbb{R}^4$ (it should be remarked that $\dot{R}$ is used as an abbreviation for $\kappa \sigma \abs{Y}^{p - 1} Y + \kappa_c \sigma \abs{Z}^{p - 1} Z$). Referring to Ref. \cite{milehins_boucwen_2025}, note that 
\[
\dot{\mathcal{W}}(T, \mathbf{X}) = \mathcal{W}'(\mathbf{X}) + U(T) \mathcal{W}''(\mathbf{X})
\]
for all $T \in \mathbb{R}_{\geq 0}$ and $\mathbf{X} \in \mathbb{R}^4$. Thus, 
\[
\dot{\mathcal{V}}(T, \mathbf{X}) = e^{- E(T)} \left( U(T) \mathcal{W}''(\mathbf{X}) - \abs{U(T)} \mathcal{W}(\mathbf{X}) \right) + e^{- E(T)} \mathcal{W}'(\mathbf{X})
\]
for all $T \in \mathbb{R}_{\geq 0}$ and $\mathbf{X} \in \mathbb{R}^4$. Define $K \in \mathbb{R}_{\geq 0}$ as
\[
K \triangleq (2 \sigma^2 + 2 \sigma + 1) \max \left( 2 \sigma^2 + 1, \frac{p + 1}{\kappa}, \frac{p + 1}{\kappa_c} \right)
\]

\begin{lem}\label{thm:EBWMCM_V_leq_0}
Under the restrictions on the values of the parameters stated above, $\dot{\mathcal{V}}(T, \mathbf{X}) \leq 0$ for all $T \in \mathbb{R}_{\geq 0}$ and $\mathbf{X} \in \mathbb{R}^4$ such that $K \leq \lVert \mathbf{X} \rVert_{\infty}$.
\end{lem}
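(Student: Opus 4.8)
The plan is to follow the same template as the proof of Lemma~\ref{thm:EBWSHCCM_V_leq_0}. First I would record that $\mathcal{W}'(\mathbf{X}) \leq 0$ for every $\mathbf{X} \in \mathbb{R}^4$: the term $-\frac{1}{\sigma}\dot{R}^2$ is nonpositive because $\sigma > 0$, and the bracket $B\abs{Z}\abs{V - \dot{R}} + \mathit{\Gamma} Z (V - \dot{R})$ is nonnegative because $\abs{\mathit{\Gamma}} \leq B$ forces $\mathit{\Gamma} Z (V - \dot{R}) \geq -B\abs{Z}\abs{V - \dot{R}}$, so the whole second term is nonpositive since $\kappa_c = 1 - \kappa > 0$ (this is the analogue of the nonpositivity invoked in the BWSHCCM case and is available from Ref.~\cite{milehins_boucwen_2025}). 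Since $e^{-E(T)} > 0$ and $\mathcal{W}'(\mathbf{X}) \leq 0$, to conclude $\dot{\mathcal{V}}(T, \mathbf{X}) \leq 0$ it suffices to show $U(T)\mathcal{W}''(\mathbf{X}) \leq \abs{U(T)}\mathcal{W}(\mathbf{X})$, and for this it is enough to prove the time-free inequality $\abs{\mathcal{W}''(\mathbf{X})} \leq \mathcal{W}(\mathbf{X})$ whenever $K \leq \lVert \mathbf{X} \rVert_{\infty}$ (the case $U(T) = 0$ being trivial).

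Next I would bound the two sides separately. For the upper bound on the left, the triangle inequality gives $\abs{\mathcal{W}''(\mathbf{X})} \leq (2\sigma^2 + 1)\abs{V} + 2\sigma\abs{R} \leq (2\sigma^2 + 2\sigma + 1)\lVert \mathbf{X} \rVert_{\infty}$. Writing $N \triangleq \lVert \mathbf{X} \rVert_{\infty}$ and $M \triangleq \max(2\sigma^2 + 1, (p+1)/\kappa, (p+1)/\kappa_c)$ so that $K = (2\sigma^2 + 2\sigma + 1)M$, it therefore suffices to prove the lower bound $\mathcal{W}(\mathbf{X}) \geq (2\sigma^2 + 2\sigma + 1)N$ for $N \geq K$. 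Two preliminary facts are that $M > 2$ (because $(p+1)/\kappa > 2$ for $p \geq 1$, $\kappa \in (0,1)$, and likewise with $\kappa_c$), whence $N \geq K > 2 \geq 1$, and that $M \geq 2\sigma^2 + 1$ by definition.

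The lower bound would then be established by a four-way case split on the component realizing $N$, mirroring the three-way split in Lemma~\ref{thm:EBWSHCCM_V_leq_0}. If $N = \abs{Y}$, then $\mathcal{W}(\mathbf{X}) \geq \frac{\kappa}{p+1}N^{p+1}$, and since $N \geq (2\sigma^2 + 2\sigma + 1)(p+1)/\kappa \geq 1$ together with $p \geq 1$ gives $N^{p} \geq N$, one obtains $\frac{\kappa}{p+1}N^{p+1} \geq (2\sigma^2 + 2\sigma + 1)N$; the case $N = \abs{Z}$ is identical with $\kappa_c$ in place of $\kappa$. If $N = \abs{V}$, then $\mathcal{W}(\mathbf{X}) \geq \frac{1}{2}V^2 = \frac{1}{2}N^2 \geq (2\sigma^2 + 2\sigma + 1)N$, using $N \geq K \geq 2(2\sigma^2 + 2\sigma + 1)$, which follows from $M \geq 2$.

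The main obstacle is the case $N = \abs{R}$, where $V$ is uncontrolled and the term $(R + \sigma V)^2$ may vanish, so that neither quadratic piece alone dominates. Here I would minimize $g(V) \triangleq (R + \sigma V)^2 + \frac{1}{2}V^2$ over $V \in \mathbb{R}$; the minimizer is $V^{*} = -2\sigma R/(2\sigma^2 + 1)$, at which $R + \sigma V^{*} = R/(2\sigma^2 + 1)$, giving $g(V^{*}) = R^2/(2\sigma^2 + 1) = N^2/(2\sigma^2 + 1)$. Since the actual $V$ satisfies $\mathcal{W}(\mathbf{X}) \geq g(V) \geq g(V^{*})$, one gets $\mathcal{W}(\mathbf{X}) \geq N^2/(2\sigma^2 + 1) \geq (2\sigma^2 + 2\sigma + 1)N$ because $N \geq K \geq (2\sigma^2 + 1)(2\sigma^2 + 2\sigma + 1)$, the last inequality being precisely where the factor $2\sigma^2 + 1$ inside $M$ is needed. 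Combining the four cases yields $\abs{\mathcal{W}''(\mathbf{X})} \leq (2\sigma^2 + 2\sigma + 1)N \leq \mathcal{W}(\mathbf{X})$, hence $\dot{\mathcal{V}}(T, \mathbf{X}) \leq 0$, and generalization over $T$ and $\mathbf{X}$ completes the argument. The only delicate point worth emphasizing is that it is exactly this coupled minimization in $R$ and $V$, rather than a crude separate treatment of the two, that makes the stated constant $K$ sufficient.
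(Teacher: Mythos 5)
Your proposal is correct and follows essentially the same route as the paper: the same reduction to $\abs{(2\sigma^2+1)V + 2\sigma R} \leq \mathcal{W}(\mathbf{X})$, the same four-way case split on which component attains $\lVert \mathbf{X} \rVert_{\infty}$, and the same use of the constant $K = (2\sigma^2+2\sigma+1)\max(2\sigma^2+1, (p+1)/\kappa, (p+1)/\kappa_c)$. The only cosmetic difference is in the $\lVert \mathbf{X} \rVert_{\infty} = \abs{R}$ case, where the paper simply asserts the inequality $\frac{1}{2\sigma^2+1}R^2 \leq (R+\sigma V)^2 + \frac{1}{2}V^2$ while you justify it by explicitly minimizing over $V$ --- the same underlying fact, just made explicit.
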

\begin{proof}

Note that $\mathcal{W}'(\mathbf{X}) \leq 0$ for all $\mathbf{X} \in \mathbb{R}^4$ (see Lemma C.1 in Ref. \cite{milehins_boucwen_2025} for a proof; it should be remarked that Lemma C.1 in Ref. \cite{milehins_boucwen_2025} holds also under the less restrictive ranges of the parameters that are used in this study). Fix $T \in \mathbb{R}_{\geq 0}$ and $\mathbf{X} \in \mathbb{R}^4$ such that $K \leq \lVert \mathbf{X} \rVert_{\infty}$. Since $\mathcal{W}'(\mathbf{X}) \leq 0$, it suffices to show that \[
U(T) ((2 \sigma^2 + 1) V + 2 \sigma R) \leq \abs{U(T)} \mathcal{W}(\mathbf{X})
\]
Therefore, it suffices to show that 
\[
\abs{(2 \sigma^2 + 1) V + 2 \sigma R} \leq \mathcal{W}(\mathbf{X})
\]
or
\[
(2 \sigma^2 + 1) \abs{V} + 2 \sigma \abs{R} \leq \mathcal{W}(\mathbf{X})
\]
There are four cases to consider:
\begin{compactitem}
\item Case I: $\lVert \mathbf{X} \rVert_{\infty} = \abs{R}$. Then, $\abs{V} \leq \abs{R}$ and
\[
(2 \sigma^2 + 2 \sigma + 1) (2 \sigma^2 + 1) \leq K \leq \abs{R}
\] 
or
\[
\abs{R} \leq (2 \sigma^2 + 2 \sigma + 1)^{-1} (2 \sigma^2 + 1)^{-1} R^2
\]
Therefore, 
\[
\begin{aligned}
(2 \sigma^2 + 1) \abs{V} + 2 \sigma \abs{R} & \leq (2 \sigma^2 + 2 \sigma + 1) \abs{R} \leq \frac{1}{2 \sigma^2 + 1} R^2\\
& \leq (R + \sigma V)^2 + \frac{1}{2} V^2 \leq \mathcal{W}(\mathbf{X})
\end{aligned}
\]
\item Case II: $\lVert \mathbf{X} \rVert_{\infty} = \abs{Y}$. Then, $\abs{R} \leq \abs{Y}$, $\abs{V} \leq \abs{Y}$, and $1 \leq \abs{Y}$. Furthermore,
\[
(2 \sigma^2 + 2 \sigma + 1) \frac{p + 1}{\kappa} \leq K \leq \abs{Y}
\] 
or
\[
\abs{Y} \leq (2 \sigma^2 + 2 \sigma + 1)^{-1} \frac{\kappa}{p + 1} Y^2
\]
Therefore, 
\[
\begin{aligned}
(2 \sigma^2 + 1) \abs{V} + 2 \sigma \abs{R} & \leq (2 \sigma^2 + 2 \sigma + 1) \abs{Y} \leq \frac{\kappa}{p + 1} Y^2\\
& \leq \frac{\kappa}{p + 1} \abs{Y}^{p + 1} \leq \mathcal{W}(\mathbf{X})
\end{aligned}
\]
\item Case III: $\lVert \mathbf{X} \rVert_{\infty} = \abs{Z}$. The proof of $\abs{V} \leq \mathcal{W} (\mathbf{X})$ follows from an argument similar to the one used in Case II.
\item Case IV: $\lVert \mathbf{X} \rVert_{\infty} = \abs{V}$. Then, $\abs{R} \leq \abs{V}$ and $1 \leq \abs{V}$. Furthermore,
\[
2 (2 \sigma^2 + 2 \sigma + 1) \leq K \leq \abs{V}
\] 
or
\[
\abs{V} \leq \frac{1}{2} (2 \sigma^2 + 2 \sigma + 1)^{-1} V^2
\]
Therefore, 
\[
(2 \sigma^2 + 1) \abs{V} + 2 \sigma \abs{R} \leq (2 \sigma^2 + 2 \sigma + 1) \abs{V} \leq \frac{1}{2} V^2 \leq \mathcal{W}(\mathbf{X})
\]
\end{compactitem}
Thus, $\dot{\mathcal{V}}(T, \mathbf{X}) \leq 0$. By generalization, this holds for all $T \in \mathbb{R}_{\geq 0}$ and $\mathbf{X} \in \mathbb{R}^4$ such that $K \leq \lVert \mathbf{X} \rVert_{\infty}$.
\end{proof}

\begin{prop}\label{thm:EBWMCM_EUB}
Under the restrictions on the values of the parameters stated above, there exists a unique solution of the NDBWMCM on any time interval $[0, T)$ with $T \in \mathbb{R}_{>0} \cup \{ +\infty \}$ for every initial condition $(R_0, Y_0, Z_0, V_0) \in \mathbb{R}^4$. Furthermore, the solutions and the outputs of the NDBWMCM are uniformly bounded.
\end{prop}
\begin{proof}
Noting that $\mathcal{W}$ is positive definite and radially unbounded,\footnote{The proof that $\mathcal{W}$ for the NDBWMCM is radially unbounded may not appear to be entirely trivial, but it is still a routine exercise in analysis.} the proof of the global existence, uniqueness, and uniform boundedness of the solutions follows the outline of the proof of Proposition \ref{thm:EBWSHCCM_EUB} by Lemma \ref{thm:EBWMCM_V_leq_0}. The proof of the uniform boundedness of the outputs follows from Proposition \ref{thm:ub_solution_imp_ub_output}.
\end{proof}

\section{Simulation Methodology}\label{sec:SDA}

The numerical simulation and the data analysis that are described in this body work were performed using Python 3.12.11, NumPy 2.3.3 \cite{harris_array_2020}, and SciPy 1.16.2 \cite{virtanen_scipy_2020}, and relied on the IEEE-754 floating point arithmetic (with the default rounding mode) for the quantization of real numbers \cite{ieee_ieee_2019}. All numerical simulations were performed using the explicit adaptive Runge-Kutta method of order 8 (\texttt{DOP853}) \cite{dormand_family_1980, prince_high_1981, hairer_solving_1993} available via the interface of the function \texttt{integrate.solve\_ivp} from the library SciPy 1.16.0 \cite{virtanen_scipy_2020}. All settings of \texttt{integrate.solve\_ivp} were left at their default values, with the exception of the maximum time step (\texttt{max\_step}), the relative tolerance (\texttt{rtol}), and the absolute tolerance (\texttt{atol}): the maximum time step was set to $\approx 10^{-2}$ for the simulation of the NDBWSHCCM and the NDBWMCM and $\approx T_c/100 \; \text{s}$ for the simulation of the BWSHCCM and the BWMCM, the relative tolerance was set to $\approx 10^{-10}$ (for all states) and the absolute tolerance was set to $\approx 10^{-12}$ (for all states). It should be remarked that the numerical ODE solver was chosen due to the relative simplicity of adoption: the authors wanted to prioritize safety over convergence properties. However, due to the non-smoothness of the governing differential equations, better convergence properties may be achieved by using event-capturing techniques and other methods designed for non-smooth systems. The code is available from the personal repository of the corresponding author.\footnote{\url{https://gitlab.com/user9716869/EBWCM}}

\bibliographystyle{asmejour} 

\bibliography{template.bib} 

\end{document}